\documentclass[11pt,reqno]{amsart}

\usepackage{amssymb,mathtools,amsmath}
\usepackage{url}
\usepackage{hyperref}
\usepackage[dvips]{color}
\usepackage{tikz-cd}
\usepackage[all]{xy}
\usepackage{cite}
\SelectTips{cm}{} 

\allowdisplaybreaks[4]

\usepackage{mathrsfs}
\let\mathcal\mathscr
\usepackage[mathscr]{eucal}
\usepackage{mathtools}

\newcommand*{\pd}[2]{\mathchoice{\frac{\partial#1}{\partial#2}}
  {\partial#1/\partial#2}{\partial#1/\partial#2}
  {\partial#1/\partial#2}}

\let\phi=\varphi
\let\kappa=\varkappa
\let\epsilon=\varepsilon

\DeclareMathOperator{\sym}{sym}

\newcommand*{\jac}[2]{\left\{ #1,#2  \right\}}

\newcommand\restr[2]{{
  \left.\kern-\nulldelimiterspace 
  #1 
  \littletaller 
  \right|_{#2} 
  }}

\newcommand{\littletaller}{\mathchoice{\vphantom{\big|}}{}{}{}}

\newdir{ >}{{}*!/-5pt/\dir{>}}

\theoremstyle{theorem}
\newtheorem{proposition}{Proposition}

\theoremstyle{definition}

\theoremstyle{remark}
\newtheorem{remark}{Remark}

\usepackage{mathrsfs}
\let\mathcal\mathscr
\usepackage[mathscr]{eucal}
\newcommand{\cprime}{\/{\mathsurround=0pt$'$}}

\author{Petr~Voj{\v{c}}{\'{a}}k}
\address{Mathematical Institute, Silesian University in Opava, Na Rybn\'{\i}\v{c}ku 1, 746 01 Opava, Czech Republic}
\email{Petr.Vojcak@math.slu.cz} 
\title[Nijenhuis torsion and Fr\"olicher--Nijenhuis brackets of recursion operators]{Nijenhuis torsion and Fr\"olicher--Nijenhuis brackets of recursion operators via their full-fledged forms}

\begin{document}
\subjclass[2020]{35B06, 37K10}
\keywords{Recursion operators, full-fledged symmetries, Nijenhuis torsion, Fr\"olicher--Nijenhuis bracket, multidimensional integrable PDEs}
\maketitle

\begin{abstract}
We present a novel approach to computing the Nijenhuis torsion and Fr\"olicher--Nijenhuis brackets of recursion operators for symmetries, based on their full-fledged forms introduced by Jahnov\'a and Voj\v c\'ak (2024). In contrast to the conventional approach, which represents recursion operators as maps between shadows of nonlocal symmetries within a given covering, our method allows the Nijenhuis torsion to be computed directly from its defining formula, without requiring any additional mathematical constructions. The same framework can also be used to compute the Fr\"olicher--Nijenhuis bracket of recursion operators and to verify their compatibility directly in their full-fledged forms. The procedure is illustrated by several examples of full-fledged recursion operators for symmetries of differential equations in four independent variables, including a new recursion operator for the four-dimensional universal hierarchy equation that, to the best of our knowledge, has not previously appeared in the literature. A notable feature of this new full-fledged recursion operator is that its shadow component does not appear to admit a reasonable representation in any of the standard conventional forms. Nevertheless, our approach allows us to prove directly that the Nijenhuis torsion of this operator vanishes. For the examples considered, the corresponding Fr\"olicher--Nijenhuis brackets also vanish, confirming the compatibility of the respective pairs of recursion operators. This demonstrates that full-fledged forms provide an effective framework for studying the hereditary and compatibility properties of recursion operators, including highly nonlocal and/or multidimensional cases that are difficult to handle by existing methods.
\end{abstract}

\section{Introduction}
The construction of recursion operators for symmetries and the subsequent study of their properties have for almost fifty years constituted one of the central problems in the theory of integrable systems of partial differential equations (PDEs). Their knowledge allows one to generate infinite hierarchies of symmetries of a given system in a relatively straightforward manner, which can subsequently be used, for instance, in the search for its explicit solutions. Since their introduction by Olver in 1977 \cite{Olv77}, recursion operators have been extensively studied by many authors, see, e.g., \cite{Blu89, Fok87, Olv93} and references therein.

Within this classical approach, recursion operators are typically constructed as pseudodifferential operators, consisting of linear combinations of total derivatives and their formal inverses, with coefficients given by differential functions. A prototypical example of such an operator is the Lenard recursion operator \cite{Gar74, Olv77}
\begin{equation}
\label{lenard}
\mathcal R = D_x^2+4u+2u_xD_x^{-1},
\end{equation}
which generates symmetries of the well-known Korteweg--de Vries (KdV) equation
$$u_t = 6uu_x+u_{xxx},$$
see also \cite{Kra11}.

Since recursion operators conventionally act on the generating functions of symmetries, we briefly recall this standard setting; for a more detailed theoretical exposition, see, e.g., \cite{Olv93, Boch99}, as well as the overview paper \cite{Kra11}. Let
\begin{equation}
\label{syst_E}
\mathcal E: F_i\left( \mathbf x, \mathbf u, \mathbf p\right)= 0, \quad i=1,\ldots,k,
\end{equation}
be a system of differential equations, where $\mathbf u = (u^1, \ldots, u^m)$ are the unknown functions of the independent variables $\mathbf x=(x_1, \ldots, x_n)$, $\mathbf p$ denotes the set of partial derivatives $p_\sigma^j= \frac{\partial^{|\sigma|}u^j}{\partial x_1^{i_1} \ldots \partial x_n^{i_n}}$, $\sigma = (i_1,\ldots,i_n)$ ranges over multi-indices with $|\sigma|=i_1+\ldots+i_n$ not exceeding the order of the system $\mathcal E$, while $F_i$ are smooth functions. A symmetry of $\mathcal E$ can be uniquely represented by an evolutionary vector field
\begin{equation}
\label{evf}
\mathbf E_{\Phi} = \sum \limits_{u_\sigma^j \in\mathbb{I}} \bar D_{\sigma}(\phi_j)\frac{\partial}{\partial u_\sigma^j}, \quad \phi_j\in\mathcal{F}(\mathcal{E}),
\end{equation}
where $\Phi=(\phi_1,\phi_2, \ldots, \phi_m)$ is the so-called generating function of the symmetry, ${\mathcal{F}}(\mathcal{E})$ denotes the algebra of smooth differential functions on $\mathcal{E}$ and
\[
\bar D_i=\restr{D_i}{\mathcal{E}},\quad i=1,\dots,n,
\]
are the restrictions of the total derivative operators.

The generating functions $\Phi$ are solutions to the linearized equation
\begin{equation}
\label{lin_sym_cond}
\restr{\ell_F}{\mathcal E} (\Phi) = 0,
\end{equation}
where
$$
\ell_F=\begin{pmatrix}
\sum_\sigma \frac{\partial F_1}{\partial u_\sigma^1}D_\sigma & \sum_\sigma \frac{\partial F_1}{\partial u_\sigma^2}D_\sigma & \ldots \\[3mm]
\sum_\sigma \frac{\partial F_2}{\partial u_\sigma^1}D_\sigma & \sum_\sigma \frac{\partial F_2}{\partial u_\sigma^2}D_\sigma & \ldots\\
\vdots & \vdots & \ddots
\end{pmatrix}
$$
denotes the linearization of $F=(F_1, \ldots, F_k)$. With the commutator as the Lie bracket, the set of all symmetries of $\mathcal E$ forms a Lie algebra $\sym(\mathcal E)$. This Lie algebra is isomorphic to the algebra of all solutions of \eqref{lin_sym_cond}, whose bracket is given by the Jacobi bracket $\jac{\cdot}{\cdot}$ defined by
\begin{equation}
\label{sec1:2}
\jac{\Phi_1}{\Phi_2} = \mathbf E_{\Phi_1}(\Phi_2)-\mathbf E_{\Phi_2}(\Phi_1).
\end{equation}
Accordingly, we will make no distinction between symmetries and their generating functions in what follows.

Applying the recursion operator to a known symmetry amounts to applying it to the corresponding generating function of that symmetry, thereby producing the generating function of a new symmetry. However, one often encounters situations in which the resulting generating function cannot be expressed solely in terms of local variables. For the Lenard operator \eqref{lenard} of the KdV equation, this occurs, for instance, when it is applied to the scaling symmetry; see, e.g., Example 14 in \cite{Kra11}. This naturally leads to extending the original space by introducing additional dependent variables, the so-called nonlocal variables, i.e., passing to a differential covering of the system under consideration.

In this extended setting, symmetries of the original system can no longer be represented in the classical way, i.e., as vector fields on the original equation manifold $\mathcal E$. Within the geometric approach of Krasil'shchik and Vinogradov \cite{Kra89}, they are instead interpreted as so-called shadows of nonlocal symmetries defined on the corresponding nonlocal extension $\tilde{\mathcal E}$. 

Although recursion operators were subsequently interpreted as B\"{a}cklund auto-transformations of the tangent covering of a given system \cite{Pap91, Gut94, Mar95}, this geometric approach does not eliminate their main drawback. Indeed, these operators still act only on shadows of nonlocal symmetries and generate new shadows, rather than full-fledged symmetries at the level of the corresponding covering. Consequently, one must verify in subsequent steps whether the resulting shadows can be lifted to full-fledged symmetries within the given covering.

A qualitatively different approach to recursion operators was introduced in our works \cite{Jah24, Jah25}, where we developed their full-fledged forms for symmetries of the so-called reduced quasi-classical self-dual Yang--Mills equation (rYME)\begin{equation}
\label{yme}
u_{yz} = u_{tx} - u_z u_{xx}+u_xu_{xz}.
\end{equation}
Unlike the conventional approach, the full-fledged recursion operators act directly on full-fledged symmetries in a given covering, rather than merely on their shadows. In \cite{Jah24}, they are constructed as endomorphisms of the $\mathbb{R}$-algebra $\mathcal A^\mathbb{N}$ of all sequences of differential functions in the variables of the corresponding covering. The subset of sequences representing full-fledged symmetries of the rYME \eqref{yme} is invariant under their action. Moreover, these endomorphisms can be represented by matrices of differential functions which, although infinite-dimensional, possess a well-defined and explicitly describable structure. Consequently, applying such recursion operators to full-fledged symmetries reduces to ordinary matrix multiplication, making the procedure computationally straightforward in contrast to the considerably more involved procedures required in the classical framework. In \cite{Jah25}, we further investigated several properties of these full-fledged recursion operators. In particular, we showed how their action on a suitably chosen set of seed full-fledged symmetries can be used to generate the entire $\mathbb{Z}$-graded Lie algebra of full-fledged symmetries of the rYME \eqref{yme}.

One of the central questions in the theory of recursion operators is the verification of their heredity (or hereditary property), a concept first formulated in \cite{Fuch79}. This property plays a crucial role in the construction of commuting hierarchies of symmetries and is one of the characteristic features of the integrability of a given system $\mathcal E$ \cite{Fuc81}.

A fundamental consequence of the hereditary property of recursion operators is the following (see, e.g., \cite{Kra11}): if two symmetries $\Phi_1, \Phi_2$ commute, i.e. $\left\{ \Phi_1, \Phi_2 \right\} = 0$, and the Lie derivatives of the recursion operator $\mathcal{R}$ along these symmetries satisfy
$$\mathcal L_{\Phi_i}(\mathcal R)= \mathbf E_{\Phi_i}(\mathcal R)-[\ell_{\Phi_i}, \mathcal R]=0, \quad i=1,2,$$  
then
$$\left\{ \mathcal R^\alpha (\Phi_1), \mathcal R^\beta (\Phi_2) \right\} = 0$$ 
for all $\alpha,\beta \in \mathbb{N}_0$.

The heredity of a given recursion operator can be characterized by the vanishing of its Nijenhuis torsion $N_{\mathcal R}$, defined by
\begin{equation}
\label{n_tor}
N_{\mathcal R} (\Phi_1,\Phi_2) = \left\{ \mathcal R(\Phi_1), \mathcal R(\Phi_2) \right\} - \mathcal R\left( \left\{ \mathcal R(\Phi_1), \Phi_2 \right\} + \left\{ \Phi_1, \mathcal R(\Phi_2) \right\} - \mathcal R \left\{ \Phi_1, \Phi_2 \right\}  \right).
\end{equation}
However, the explicit computation of the Nijenhuis torsion is, in general, a highly nontrivial task. The Nijenhuis property of recursion operators has been studied in a number of works, in particular within the geometric framework developed by Krasil'shchik and his collaborators; see, e.g., \cite{Ker00, Kra22, Kra23, Mor23, Ver22, Kra17}. These studies provide effective tools for verifying the Nijenhuis property in a number of important classes of integrable systems. Nevertheless, they are formulated at the level of shadows and do not explicitly retain the information encoded in the corresponding full-fledged symmetries within a given differential covering. Consequently, computing the Nijenhuis torsion within the conventional shadow-based representation is not straightforward, since neither the Jacobi bracket nor the action of the recursion operator is intrinsically defined on shadows without reference to the corresponding nonlocal extension.

The main aim of the present paper is to demonstrate that the full-fledged form of a recursion operator, acting directly on full-fledged nonlocal symmetries rather than merely on their shadows, provides a natural and efficient framework for the computation of its Nijenhuis torsion by means of a direct application of the defining formula \eqref{n_tor}. The key observation is that the full-fledged form of such an operator encodes sufficient information about its action on the corresponding covering so that all Jacobi brackets appearing in \eqref{n_tor} can be computed explicitly at the level of generating functions of full-fledged symmetries. The same approach can also be applied to the computation of the Fr\"olicher--Nijenhuis bracket of two recursion operators, providing a direct way to verify their compatibility within the given covering.

As a demonstration of this approach, we consider two four-dimensional linearly degenerate equations, namely, the four-dimensional Mart\'{i}nez Alonso--Shabat equation (4D MAS) \cite{Sha04}
\begin{equation}
\label{4DMAS}
u_{ty} = u_z u_{xy}-u_y u_{xz},
\end{equation}
and the four-dimensional universal hierarchy equation (4D UHE) \cite{Bog17}
\begin{equation}
\label{4DUHE}
u_{yz} = u_{tt}-u_z u_{tx}+u_t u_{xz}.
\end{equation}
Like the rYME \eqref{yme}, both equations arise in a variety of contexts in the literature. For instance, in \cite{Kra23} they are obtained as symmetry reductions of the five-dimensional Mart\'{i}nez Alonso--Shabat equation, while in \cite{Dou19} they arise within the classification of four-dimensional integrable systems associated with sixfolds in the Grassmannian $\textbf{Gr}(4,6)$. The 4D MAS equation \eqref{4DMAS} has attracted considerable attention in the literature. In particular, it was investigated in detail in \cite{Ser14, Zha17, Kra21} and \cite{Mor14, Voj23}, where two distinct recursion operators acting on shadows of nonlocal symmetries were constructed. By contrast, the 4D UHE \eqref{4DUHE} has been studied less extensively. Nevertheless, several aspects of its integrability have been investigated in \cite{Pav17} and \cite{Mor22}, including the construction of a recursion operator for symmetries.

Despite these developments, the available recursion operators for both equations are formulated only at the level of shadows of nonlocal symmetries. To the best of our knowledge, recursion operators acting directly on full-fledged nonlocal symmetries have not yet been described for either equation. The first goal of the present paper is therefore to construct full-fledged recursion operators for both equations. In the case of the 4D UHE \eqref{4DUHE}, we additionally obtain a second recursion operator for shadows of symmetries, complementary to the one previously reported in \cite{Mor22}.

Having constructed these operators, we use their explicit action on full-fledged nonlocal symmetries to compute their Nijenhuis torsions directly from the defining formula \eqref{n_tor}. We then use the same full-fledged forms to compute the Fr\"olicher--Nijenhuis brackets of the recursion operators and thereby verify their compatibility. This provides a clear demonstration of how the full-fledged form of recursion operators can be employed in practice. For this purpose, we work in differential coverings sufficiently rich to accommodate the nonlocal symmetries under consideration, while remaining simple enough to keep the resulting computations explicit and transparent.

\section{Coverings and Full-Fledged Recursion Operators}
In contrast to recursion operators acting merely on symmetry shadows, full-fledged recursion operators are not determined solely by the underlying differential equation. Their explicit form also depends on the choice of differential covering, which specifies the nonlocal variables needed to realize full-fledged nonlocal symmetries. In this section, we construct such coverings together with the corresponding full-fledged recursion operators for the two equations under consideration. These constructions provide the geometric framework required for the subsequent computation of the corresponding Nijenhuis torsions and Fr\"olicher--Nijenhuis brackets.
\subsection{The 4D Mart\'{i}nez Alonso--Shabat equation}
For the 4D MAS equation \eqref{4DMAS}, we build on several results from \cite{Mor14,Voj23}, with some minor changes in notation and conventions. In particular, we will work in the covering
\begin{align} 
\label{mas_cov} 
\tau^q\colon\tilde{\mathcal{E}}^q\to\mathcal{E}&\quad \left|\begin{array}{l} q_0 = u,\\ q_{i,y} = -u_y q_{i-1,x},\\ q_{i,z} = -u_z q_{i-1,x} + q_{i-1,t}, \quad i \geq 1, \end{array}\right. 
\end{align}
cf. the construction and formula (7) in \cite{Voj23}. 
\begin{remark}
The 4D MAS equation \eqref{4DMAS} admits a richer differential covering involving two additional sequences of nonlocal variables; see \cite{Voj23}. Nevertheless, the covering \eqref{mas_cov} is sufficiently rich to lift the recursion operators for shadows constructed in \cite{Mor14,Voj23} to full-fledged forms, while keeping the subsequent computations of the Nijenhuis torsions more manageable and transparent.
\end{remark}

Consider the $\mathbb{R}$-algebra $\mathcal A^{\mathbb N}$ of all sequences of differential functions in the variables of the covering $\tau^q$ (we refer to \cite{Jah24} for a detailed discussion of this formalism). Then each full-fledged nonlocal symmetry of the 4D MAS equation \eqref{4DMAS} in the covering $\tau^q$ is uniquely represented by the vector-valued generating function $P=[p_0, p_1,p_2, \ldots] \in \mathcal A^{\Bbb N}$, whose components are smooth functions on $\tilde{\mathcal{E}}^q$ satisfying the following conditions for all $\alpha\geq1$:
\begin{align}
\label{4DMAS-shad}
\tilde D_{ty}(p_0) + u_y \tilde D_{xz}(p_0)-u_z\tilde D_{xy}(p_0)+u_{xz}\tilde D_y(p_0)-u_{xy}\tilde D_z(p_0)&=0,\\[2mm]
\label{4DMAS-comp1}
\tilde D_y(p_\alpha) + u_y\tilde D_x(p_{\alpha-1})+q_{\alpha-1,x}\tilde D_y(p_0)&=0,\\[2pt]
\label{4DMAS-comp2}
\tilde D_z(p_\alpha) - \tilde D_t(p_{\alpha-1}) + u_z\tilde D_x(p_{\alpha-1})+q_{\alpha-1,x}\tilde D_z(p_0)&=0.
\end{align}
Here $\tilde D_\sigma$ are extensions of the total derivative operators $D_\sigma$ to the covering $\tau^q$ and $p_0$ is the shadow of the full-fledged $\tau^q$-symmetry $P$.

At the level of shadows, two independent recursion operators have been constructed. The first one, due to Morozov \cite{Mor14}, is defined by
\begin{equation}
\label{4DMAS-ro1}
\begin{aligned}
\mathcal R_1^{sh}: \hat p_{0,y} &= u_{xy}p_0-u_y p_{0,x},\\
\hat p_{0,z} &= u_{xz}p_0-u_z p_{0,x}+ p_{0,t}.
\end{aligned}
\end{equation}
In the covering $\tau^q$, these relations can be integrated explicitly using \eqref{4DMAS-comp1}--\eqref{4DMAS-comp2}, yielding
\begin{equation*}
\hat p_0 = u_x p_0+p_1.
\end{equation*}
A direct calculation further shows that the remaining components of the resulting full-fledged symmetry $\hat P=[\hat p_0, \hat p_1, \hat p_2, \ldots ]$ satisfy
\begin{equation*}
\hat p_j = q_{j,x} p_0+p_{j+1}, \quad j\geq1.
\end{equation*}
Since $q_0=u$ by the definition of the covering $\tau^q$, the two preceding relations can be combined into the single formula
\begin{equation}
\label{4DMAS-ro1-ff}
\mathcal R_1:\ \hat p_j=q_{j,x}p_0+p_{j+1},\qquad j\geq0,
\end{equation}
which gives the full-fledged recursion operator $\mathcal R_1$ in the covering $\tau^q$. Furthermore, the action of this operator can be compactly represented in matrix form as
\begin{equation*}
\hat P^T = \mathcal R_1 \cdot P^T,
\end{equation*}
where ${}^T$ denotes transposition, $\cdot$ stands for matrix multiplication, and $\mathcal R_1$ is represented by the following infinite matrix:
\begin{equation*}
{\mathcal R_1}=
\begin{pmatrix}
q_{0,x} & 1 & 0 & 0 & 0 & 0 & 0 & 0 & 0 & 0 & \ldots\\[2mm]
q_{1,x} & 0 & 1 & 0 & 0 & 0 & 0 & 0 & 0 & 0 & \ldots\\[2mm]
q_{2,x} & 0 & 0 & 1 & 0 & 0 & 0 & 0 & 0 & 0 & \ldots\\[2mm]
q_{3,x} & 0 & 0 & 0 & 1 & 0 & 0 & 0 & 0 & 0 & \ldots \\[2mm]
q_{4,x} & 0 & 0 & 0 & 0 & 1 & 0 & 0 & 0 & 0 & \ldots\\[2mm]
q_{5,x} & 0 & 0 & 0 & 0 & 0 & 1 & 0 & 0 & 0 & \ldots\\[2mm]
q_{6,x} & 0 & 0 & 0 & 0 & 0 & 0 & 1 & 0 & 0 & \ldots\\[2mm]
q_{7,x} & 0 & 0 & 0 & 0 & 0 & 0 & 0 & 1 & 0 & \ldots\\[2mm]
\vdots & \vdots & \vdots & \vdots & \vdots & \vdots & \vdots & \vdots & \vdots & \vdots & \ddots
\end{pmatrix}.\\[3mm]
\end{equation*}

\begin{remark}
In what follows, for all full-fledged recursion operators considered below, we denote their action on a generating function P simply by $\mathcal R \cdot P$, suppressing the transpose symbol for notational convenience.
\end{remark}

In \cite{Voj23}, the recursion operators for the 4D MAS equation \eqref{4DMAS} were further studied, and a new independent recursion operator acting on shadows was constructed. In its conventional form, it is given by
\begin{equation}
\label{4DMAS-ro2}
\begin{aligned}
\mathcal R_2^{sh}: \hat p_{0,y} &= t u_{xy} p_0-t u_y p_{0,x} + z p_{0,y},\\
\hat p_{0,z} &= t u_{xz} p_0- t u_z p_{0,x}+ t p_{0,t} + z p_{0,z} + p_0,
\end{aligned}
\end{equation}
which, in view of equations \eqref{4DMAS-comp1}--\eqref{4DMAS-comp2}, is equivalent to
\begin{equation*}
\hat p_0 = tq_{0,x} p_0 + zp_0 + tp_1.
\end{equation*}

As in the case of $\mathcal R_1^{sh}$, this relation can be lifted to the level of full-fledged symmetries. Using the defining equations \eqref{4DMAS-comp1}--\eqref{4DMAS-comp2}, we obtain the following expressions for the remaining components of the resulting full-fledged $\tau^q$-symmetry $\hat P=[\hat p_0, \hat p_1, \hat p_2, \ldots ]$
\begin{equation*}
\hat p_j = tq_{j,x} p_0 + zp_j + tp_{j+1}, \quad j\geq1.
\end{equation*} 
Combining this with the expression for $\hat p_0$, we obtain
\begin{equation}
\label{4DMAS-ro2-ff}
\mathcal R_2:\ \hat p_j=tq_{j,x}p_0+zp_j+tp_{j+1},\qquad j\geq0.
\end{equation}
Thus, the full-fledged recursion operator $\mathcal R_2$ is represented by the infinite matrix
\begin{equation*}
{\mathcal R_2}=
\begin{pmatrix}
tq_{0,x}+z & t & 0 & 0 & 0 & 0 & 0 & 0 & 0 & 0 & \ldots\\[2mm]
tq_{1,x} & z & t & 0 & 0 & 0 & 0 & 0 & 0 & 0 & \ldots\\[2mm]
tq_{2,x} & 0 & z & t & 0 & 0 & 0 & 0 & 0 & 0 & \ldots\\[2mm]
tq_{3,x} & 0 & 0 & z & t & 0 & 0 & 0 & 0 & 0 & \ldots \\[2mm]
tq_{4,x} & 0 & 0 & 0 & z & t & 0 & 0 & 0 & 0 & \ldots\\[2mm]
tq_{5,x} & 0 & 0 & 0 & 0 & z & t & 0 & 0 & 0 & \ldots\\[2mm]
tq_{6,x} & 0 & 0 & 0 & 0 & 0 & z & t & 0 & 0 & \ldots\\[2mm]
tq_{7,x} & 0 & 0 & 0 & 0 & 0 & 0 & z & t & 0 & \ldots\\[2mm]
\vdots & \vdots & \vdots & \vdots & \vdots & \vdots & \vdots & \vdots & \vdots & \vdots & \ddots
\end{pmatrix}.\\[3mm]
\end{equation*}

\begin{remark}
The above matrix representations immediately yield
\begin{equation}
\label{R1R2}
\mathcal R_2=t\mathcal R_1+z\mathcal I,
\end{equation}
where $\mathcal I$ denotes the infinite identity matrix corresponding to the identity map on the $\Bbb R$-algebra $\mathcal A^{\mathbb N}$. Interestingly, the relation \eqref{R1R2} between the recursion operators $\mathcal R_1$ and $\mathcal R_2$ is exactly the same as that between the operators $\mathcal R^q$ and $\mathcal R^m$ for the rYME \eqref{yme}, see Proposition 2 in \cite{Jah24}. This coincidence may be related to the fact that both the 4D MAS equation \eqref{4DMAS} and the rYME \eqref{yme} belong to the same class of linearly degenerate integrable systems in the classification by Doubrov \textit{et al.} \cite{Dou19}, see Table 2 therein.\\
\end{remark}

\subsection{The 4D universal hierarchy equation}
We now turn to the 4D UHE \eqref{4DUHE}. As in the case of the 4D MAS equation \eqref{4DMAS}, we work in a differential covering containing a sufficiently rich set of nonlocal variables to accommodate the full-fledged forms of the recursion operators. We construct the covering by applying the same general procedure as for the covering \eqref{mas_cov}, namely, by deriving it from the Lax pair; see, e.g., \cite{Kra21,Voj23} and references therein for the corresponding construction in the case of the 4D MAS equation \eqref{4DMAS}.

We start from the Lax pair \cite{Pav17} (cf. also \cite{Dou19}) of the 4D UHE \eqref{4DUHE} in the form
\begin{equation*}
\begin{aligned}
w_t &= u_zw_x+\lambda w_z,\\
w_y &= (u_t+\lambda u_z) w_x + \lambda^2w_z,
\end{aligned}
\end{equation*}
where $\lambda$ is a non-removable parameter. We rewrite it in a slightly modified form as
\begin{equation}
\label{4DUHE_Lax2}
\begin{aligned}
w_t &= u_zw_x+\lambda w_z,\\
w_y &= u_t w_x + \lambda w_t,
\end{aligned}
\end{equation}
and then introduce the formal Laurent expansion
$$w=\sum \limits_{i \in \Bbb Z} \lambda^iw_i.$$
Substitution into \eqref{4DUHE_Lax2} yields the infinite-dimensional covering over the 4D UHE \eqref{4DUHE} given by
\begin{equation}
\label{4DUHE_inf_cov}
\begin{aligned}
w_{i,t} &= u_zw_{i,x}+w_{i-1,z},\\
w_{i,y} &= u_tw_{i,x}+w_{i-1,t}, \quad i \in \Bbb Z.
\end{aligned}
\end{equation}
Next, we impose the normalization $w_0=-x$ and set $w_i=0$ for all $i\geq1$. The resulting reduction of the covering \eqref{4DUHE_inf_cov} takes the following form:
\begin{align*}
w_0 &= -x,\quad w_{-1} = u,\\
w_{i,t} &= w_{i+1,y} - u_t w_{i+1,x},\\
w_{i,z} &= w_{i+1,t} - u_z w_{i+1,x}, \quad i\leq -2.
\end{align*}
Finally, setting $w_i=q_{-i-1}$, we arrive at the Abelian covering of the 4D UHE \eqref{4DUHE}
\begin{align} 
\label{4DUHE_cov} 
\tau^q\colon\tilde{\mathcal{E}}^q\to\mathcal{E}&\quad \left|\begin{array}{l} q_{-1} = -x, \quad q_0 = u,\\ 
q_{i,t} = -u_t q_{i-1,x} + q_{i-1,y},\\ 
q_{i,z} = -u_z q_{i-1,x} + q_{i-1,t}, \quad i \geq 1. \end{array}\right. 
\end{align}
Thus, we shall carry out all computations for the 4D UHE \eqref{4DUHE} in the covering \eqref{4DUHE_cov}. As in the case of the 4D MAS equation \eqref{4DMAS}, we have $q_0=u$ by construction.

Each full-fledged $\tau^q$-symmetries of the 4D UHE \eqref{4DUHE} is uniquely represented by a vector-valued generating function $P=[p_0,p_1,p_2,\ldots]$ whose components satisfy the system of equations
\begin{align}
\label{4DUHE-shad}
\tilde D_{yz}(p_0) - \tilde D_t^2(p_0) - u_t\tilde D_{xz}(p_0)
+ u_z\tilde D_{tx}(p_0) + u_{tx}\tilde D_z(p_0)
- u_{xz}\tilde D_t(p_0) &= 0,\\[2mm]
\label{4DUHE-comp1}
\tilde D_t(p_\alpha) + u_t\tilde D_x(p_{\alpha-1})
- \tilde D_y(p_{\alpha-1})
+ q_{\alpha-1,x}\tilde D_t(p_0) &= 0,\\[2pt]
\label{4DUHE-comp2}
\raisetag{18pt}
\tilde D_z(p_\alpha) + u_z\tilde D_x(p_{\alpha-1})
+ u_t\tilde D_x(p_{\alpha-2}) - \tilde D_y(p_{\alpha-2}) \notag\\[-1mm]
{}+q_{\alpha-1,x}\tilde D_z(p_0)
+q_{\alpha-2,x}\tilde D_t(p_0) &= 0,
\qquad \alpha\geq1,
\end{align}
where $p_{-1}=0$.

To find recursion operators for shadows of symmetries of the 4D UHE \eqref{4DUHE}, we seek a new shadow $\phi$ in the form of a linear combination of the components of $P$ and their derivatives up to some finite order, with coefficient functions depending on the internal coordinates of $\tilde{\mathcal E}^q$, which are to be determined. We then impose the condition that $\phi$ satisfy \eqref{4DUHE-shad}. In the present case, we restrict the general \textit{ansatz} to a linear combination involving only the first three components $p_0$, $p_1$, and $p_2$ of $P$. Under this assumption, we obtain the following result:
\begin{proposition}
If the coefficient functions are allowed to depend on the internal coordinates of $\tilde{\mathcal E}^q$ up to first order, the above ansatz yields the following four shadows of symmetries:
\begin{align*}
\phi_0=p_0, \quad \phi_1=u_xp_0&+p_1, \quad \phi_2=(u_x^2+q_{1,x})p_0+u_xp_1+p_2, \\
\intertext{and}
\phi_3&=z\phi_0+t\phi_1+y\phi_2.\\
\end{align*}
\end{proposition}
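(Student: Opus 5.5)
The statement has two parts: each of the four expressions is a shadow, and these are the only solutions of the ansatz. I would prove them in that order.

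\textbf{Ansatz and reduction of variables.} I write
\[
\phi = \sum_{k=0}^{2}\Bigl(a_k p_k + \sum_{i} b_{k,i}\,\tilde D_i(p_k) + \dots\Bigr),
\]
where the coefficients $a_k, b_{k,i},\dots$ are unknown functions of $x,y,z,t$, of $u$ and its first-order derivatives, and of the nonlocal variables $q_j$ and their first derivatives. On $\tilde{\mathcal E}^q$, the relations \eqref{4DUHE-comp1}--\eqref{4DUHE-comp2} express $\tilde D_t(p_\alpha)$ and $\tilde D_z(p_\alpha)$, for $\alpha\ge1$, through derivatives of $p_{\alpha-1},p_{\alpha-2},p_0$. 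Likewise, the covering \eqref{4DUHE_cov} expresses $q_{i,t}$ and $q_{i,z}$ through $q_{i-1}$.

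I substitute $\phi$ into \eqref{4DUHE-shad} and eliminate all such dependent quantities, including the principal derivative $u_{yz}$ via \eqref{4DUHE}. The left-hand side then becomes a polynomial in the remaining jet coordinates of $p_0,p_1,p_2$. The $p_k$ are arbitrary up to these constraints, so every coefficient of this polynomial must vanish. This produces an overdetermined linear system of PDEs for the unknown coefficients.

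\textbf{Verification that $\phi_0,\dots,\phi_3$ are shadows.}
\begin{itemize}
\item $\phi_0=p_0$ solves \eqref{4DUHE-shad} by definition.
\item For $\phi_1$, I mimic the 4D MAS case. I compute $\tilde D_t(\phi_1)$ and $\tilde D_z(\phi_1)$ using \eqref{4DUHE-comp1}--\eqref{4DUHE-comp2} with $\alpha=1$ and $q_{0}=u$, $q_{-1}=-x$. This should give relations of the form $\tilde D_t\phi_1=(\dots)p_0+(\dots)\tilde D(p_0)$ and similarly for $\tilde D_z\phi_1$. These define a shadow recursion in the style of \eqref{4DMAS-ro1}, and their compatibility, by cross-differentiation, is equivalent to $\phi_1$ satisfying the linearized equation.
\item For $\phi_2$, I apply the same check. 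It is cleaner to observe that $\phi_2$ is the image of $\phi_1$ under the first-order operator: substituting $p_\alpha\mapsto \hat p_\alpha$ with $\hat p_\alpha=q_{\alpha,x}p_0+p_{\alpha+1}$, analogous to \eqref{4DMAS-ro1-ff}, should reproduce $(u_x^2+q_{1,x})p_0+u_xp_1+p_2$. This identity I would check first.
\item For $\phi_3$, I use the explicit dependence on $y,z,t$. I expand \eqref{4DUHE-shad} applied to $z\phi_0+t\phi_1+y\phi_2$. The terms without differentiated coefficients vanish by the previous cases. The remaining terms, arising from $\tilde D_y(y)=\tilde D_z(z)=\tilde D_t(t)=1$, are first derivatives. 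They must cancel thanks to the relations among $\phi_0,\phi_1,\phi_2$ that mirror \eqref{4DUHE-comp1}--\eqref{4DUHE-comp2}, namely the Lax structure $w_y=u_tw_x+\lambda w_t$.
\end{itemize}

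\textbf{Completeness.} I solve the determining system step by step.
\begin{itemize}
\item First, the highest-order terms in the $p_k$ force the derivative coefficients $b_{k,i}$ to vanish, or to be expressible away after the reductions. This leaves $\phi=a_0p_0+a_1p_1+a_2p_2$.
\item Next, the coefficients of the highest derivatives of $p_2$ and $p_1$ force $a_2$ and $a_1$ to satisfy simple linear equations, with solutions affine in $y,z,t$.
\item Finally, the $p_0$-terms determine $a_0$ up to the homogeneous contribution of $\phi_0$.
\end{itemize}

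The main obstacle is this completeness step. The determining system is large, and keeping track of the nonlocal dependence on $q_{j},q_{j,x}$ is error-prone. I would organize it by the grading in which $\deg q_i = i+1$, $\deg u_x=\deg p_0$ shift $=1$, and so on. The equations then split into homogeneous components, which should leave exactly the four-dimensional solution space spanned by $\phi_0,\phi_1,\phi_2,\phi_3$.
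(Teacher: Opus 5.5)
Your overall route is exactly how the paper arrives at this proposition: substitute the ansatz into \eqref{4DUHE-shad}, eliminate $\tilde D_tp_\alpha$, $\tilde D_zp_\alpha$ ($\alpha\ge1$), $q_{i,t}$, $q_{i,z}$ and $u_{yz}$, and solve the resulting overdetermined linear system. The paper states the result as the outcome of that symbolic computation (done with \textsc{Jets}) and gives no written proof. So your completeness step is, like the paper's, a computation still to be carried out rather than an argument.

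Your description of that computation is off in two places:
\begin{itemize}
\item The highest-order terms are not what kills the derivative coefficients of $p_0$. The top-order part of the linearized operator applied to $b\,\tilde D_\sigma p_0$ is $b\,\tilde D_\sigma$ of the left-hand side of \eqref{4DUHE-shad}, which vanishes identically. The conditions therefore come from the lower-order commutator terms.
\item $a_1=c_1+c_2u_x+c_3(t+yu_x)$ is not affine in $y,z,t$ alone.
\end{itemize}
The genuine gaps, however, are in your verification that $\phi_1,\phi_2,\phi_3$ are shadows.

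\textbf{The $\phi_1$ check.} You correctly obtain $\tilde D_t\phi_1=A:=u_{tx}p_0-u_tp_{0,x}+p_{0,y}$ and $\tilde D_z\phi_1=B:=u_{xz}p_0-u_zp_{0,x}+p_{0,t}$. But $\tilde D_zA-\tilde D_tB$ is precisely the left-hand side of \eqref{4DUHE-shad}. Cross-differentiation compatibility is therefore equivalent to $p_0$ being a shadow, not $\phi_1$, so that check only re-proves your hypothesis. Let $L$ denote the operator in \eqref{4DUHE-shad}. What you need is
\[
L(\phi_1)=\tilde D_yB-\tilde D_tA-u_t\tilde D_xB+u_z\tilde D_xA+u_{tx}B-u_{xz}A=0,
\]
which does hold modulo \eqref{4DUHE} and its $x$-derivative.

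\textbf{The $\phi_2$ shortcut.} The identity $u_x\hat p_0+\hat p_1=\phi_2$ proves nothing by itself. It makes $\phi_2$ a shadow only if $\hat P$ satisfies \eqref{4DUHE-comp1}--\eqref{4DUHE-comp2}. That is the content of the later proposition on $\hat{\mathcal R}_1$, and you do not check it. Only the $\alpha=1$ relations for $\hat P$ are needed. They follow from the $\alpha=2$ relations for $P$ together with $q_{1,t}=u_y-u_tu_x$ and $q_{1,z}=u_t-u_zu_x$.

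\textbf{The $\phi_3$ step.} Running the $\phi_1$ computation on $\hat P$ gives $L(\phi_2)=0$ and $\tilde D_z\phi_2=u_{xz}\phi_1-u_z\tilde D_x\phi_1+\tilde D_t\phi_1$. This is exactly what your $\phi_3$ step requires. Using $L(\phi_i)=0$ for $i\le 2$,
\[
L(\phi_3)=\bigl(\tilde D_y\phi_0-u_t\tilde D_x\phi_0+u_{tx}\phi_0\bigr)-2\tilde D_t\phi_1+u_z\tilde D_x\phi_1-u_{xz}\phi_1+\tilde D_z\phi_2 .
\]
The bracket equals $\tilde D_t\phi_1$, so after substituting $\tilde D_z\phi_2$ the right-hand side vanishes. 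Your assertion that these terms ``must cancel'' is thus correct, but only once these relations are actually established.
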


Let us discuss the particular results. The first shadow $\phi_0$ corresponds to the identity operator and is therefore of no particular interest here. The second shadow $\phi_1$ gives rise to the recursion operator $\hat{\mathcal R}_1^{sh}$ for shadows whose conventional form is
\begin{equation}
\label{4DUHE-ro1}
\begin{aligned}
\hat{\mathcal R}_1^{sh}: \phi_{1,t} &= u_{tx}p_0-u_t p_{0,x} + p_{0,y},\\
\phi_{1,z} &= u_{xz}p_0-u_z p_{0,x} + p_{0,t}.
\end{aligned}
\end{equation}
This operator coincides with the recursion operator previously constructed in \cite{Mor22}.

The third shadow $\phi_2$ corresponds to the square of the recursion operator $\hat{\mathcal R}_1^{sh}$ defined by $\phi_1$ and hence does not provide any essentially new information. We will see this explicitly below, after constructing the full-fledged forms of the corresponding recursion operators and their matrix representations, see Remark \ref{rem_sq}.

In contrast, the fourth shadow $\phi_3$ gives rise to a recursion operator $\hat{\mathcal R}_2^{sh}$ which, to the best of our knowledge, has not been previously identified and is therefore genuinely new. Moreover, unlike the recursion operators considered above, we are not aware of any natural way to express this recursion operator in a conventional form analogous to \eqref{4DUHE-ro1}, or to the forms \eqref{4DMAS-ro1} and \eqref{4DMAS-ro2} obtained for the 4D MAS equation \eqref{4DMAS}. From this perspective, direct computations with the operator $\hat{\mathcal R}_2^{sh}$ on known shadows appear to be rather inconvenient.

The difficulty of working with the conventional representation of $\hat{\mathcal R}_2^{sh}$ suggests that it is preferable to work directly with full-fledged forms. We therefore construct the full-fledged forms of both $\hat{\mathcal R}_1^{sh}$ and $\hat{\mathcal R}_2^{sh}$ in the covering $\tau^q$, providing a uniform representation suitable for the computations below.

We first consider the operator $\hat{\mathcal R}_1^{sh}$. The relation
\begin{equation}
\label{phi1}
\phi_1=q_{0,x}p_0+p_1
\end{equation}
already determines the action of the full-fledged recursion operator $\hat{\mathcal R}_1$ on the shadow component. Using the defining equations \eqref{4DUHE-comp1}--\eqref{4DUHE-comp2}, one can extend this relation to all components of the resulting $\tau^q$-symmetry $\hat P=[\hat p_0,\hat p_1,\hat p_2,\ldots]$. A direct computation yields the following result.
\begin{proposition}
If $P=\left[p_0, p_1, p_2,\ldots  \right]$ is a full-fledged $\tau^q$-symmetry of the 4D UHE \eqref{4DUHE}, then $\hat P=\left[\hat p_0, \hat p_1, \hat p_2,\ldots  \right]$ defined by the components
\begin{align}
\label{4DUHE_ro1_ff}
\hat p_j &= q_{j,x} p_0 + p_{j+1}, \quad j \geq 0,
\end{align}
is also a full-fledged $\tau^q$-symmetry of the 4D UHE \eqref{4DUHE}. Thus, the full-fledged recursion operator $\hat{\mathcal R}_1$ is represented by the infinite matrix
\begin{equation*}
{\hat{\mathcal R}_1}=
\begin{pmatrix}
q_{0,x} & 1 & 0 & 0 & 0 & 0 & 0 & 0 & 0 & 0 & \ldots\\[2mm]
q_{1,x} & 0 & 1 & 0 & 0 & 0 & 0 & 0 & 0 & 0 & \ldots\\[2mm]
q_{2,x} & 0 & 0 & 1 & 0 & 0 & 0 & 0 & 0 & 0 & \ldots\\[2mm]
q_{3,x} & 0 & 0 & 0 & 1 & 0 & 0 & 0 & 0 & 0 & \ldots \\[2mm]
q_{4,x} & 0 & 0 & 0 & 0 & 1 & 0 & 0 & 0 & 0 & \ldots\\[2mm]
q_{5,x} & 0 & 0 & 0 & 0 & 0 & 1 & 0 & 0 & 0 & \ldots\\[2mm]
q_{6,x} & 0 & 0 & 0 & 0 & 0 & 0 & 1 & 0 & 0 & \ldots\\[2mm]
q_{7,x} & 0 & 0 & 0 & 0 & 0 & 0 & 0 & 1 & 0 & \ldots\\[2mm]
\vdots & \vdots & \vdots & \vdots & \vdots & \vdots & \vdots & \vdots & \vdots & \vdots & \ddots
\end{pmatrix}.\\[3mm]
\end{equation*}
\end{proposition}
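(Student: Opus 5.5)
The statement is that $\hat P$ satisfies the system \eqref{4DUHE-shad}--\eqref{4DUHE-comp2} whenever $P$ does. I would check the three groups of equations separately, working throughout on $\tilde{\mathcal E}^q$ and using the covering relations \eqref{4DUHE_cov} to rewrite $t$- and $z$-derivatives of $q_{j,x}$. The facts needed are
\[
\tilde D_t(q_{j,x}) = \tilde D_x\bigl(-u_tq_{j-1,x}+q_{j-1,y}\bigr),\qquad \tilde D_z(q_{j,x}) = \tilde D_x\bigl(-u_zq_{j-1,x}+q_{j-1,t}\bigr),\qquad j\ge 1,
\]
together with $q_{0}=u$ and $q_{-1,x}=-1$.

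\textbf{Shadow equation.} For \eqref{4DUHE-shad} I would first show that $\hat p_0=u_xp_0+p_1$ satisfies the conventional relations \eqref{4DUHE-ro1}.
\begin{itemize}
\item Compute $\tilde D_t(\hat p_0)=u_{tx}p_0+u_x\tilde D_t(p_0)+\tilde D_t(p_1)$. Substituting \eqref{4DUHE-comp1} with $\alpha=1$ and $q_{0,x}=u_x$, the terms $u_x\tilde D_t(p_0)$ cancel. This leaves $u_{tx}p_0-u_t p_{0,x}+p_{0,y}$, as required.
\item Compute $\tilde D_z(\hat p_0)$ in the same way, using \eqref{4DUHE-comp2} with $\alpha=1$, $p_{-1}=0$ and $q_{-1,x}=-1$. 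This produces the extra term $+\tilde D_t(p_0)$ and gives $u_{xz}p_0-u_zp_{0,x}+p_{0,t}$.
\end{itemize}
Next I would impose the compatibility $\tilde D_z\tilde D_t(\hat p_0)=\tilde D_t\tilde D_z(\hat p_0)$. Applying it to these two right-hand sides and expanding, with $u$ solving \eqref{4DUHE}, should give precisely the linearized equation \eqref{4DUHE-shad} for $\hat p_0$, up to a multiple of \eqref{4DUHE-shad} for $p_0$. This is the standard argument that \eqref{4DUHE-ro1} maps shadows to shadows.

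\textbf{Equation \eqref{4DUHE-comp1}.} For $\alpha\ge1$ I substitute $\hat p_\alpha=q_{\alpha,x}p_0+p_{\alpha+1}$ and $\hat p_{\alpha-1}=q_{\alpha-1,x}p_0+p_\alpha$.
\begin{itemize}
\item Replace $\tilde D_t(p_{\alpha+1})$ using \eqref{4DUHE-comp1} at level $\alpha+1$.
\item Replace $\tilde D_t(q_{\alpha,x})$ using the covering relation above.
\item Replace $\tilde D_t(\hat p_0)$ using the formula from the first step.
\end{itemize}
After this, the terms in $p_\alpha$ should cancel against $u_t\tilde D_x(p_\alpha)-\tilde D_y(p_\alpha)$. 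The terms that are linear in $p_0$, $\tilde D_x(p_0)$, $\tilde D_y(p_0)$ and $\tilde D_t(p_0)$ should cancel separately. The key identity here is
\[
q_{\alpha-1,x}\,u_{tx}+u_t q_{\alpha-1,xx}-q_{\alpha-1,xy}=\tilde D_x\bigl(u_tq_{\alpha-1,x}-q_{\alpha-1,y}\bigr)=-\tilde D_t(q_{\alpha,x}).
\]
The products of $q$'s should also cancel, namely $q_{\alpha,x}\tilde D_t(p_0)$ against $q_{\alpha-1,x}\,\tilde D_t(\hat p_0)$ and related terms.

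\textbf{Equation \eqref{4DUHE-comp2}.} This is the same bookkeeping, now involving two lower levels.
\begin{itemize}
\item Use \eqref{4DUHE-comp2} at level $\alpha+1$.
\item Use the $z$-covering relation for $\tilde D_z(q_{\alpha,x})$.
\item Use the $t$-relation for $\tilde D_t(q_{\alpha-1,x})$, which enters through $\hat p_{\alpha-2}$.
\item Treat the boundary case $\alpha=1$ separately, since then $\hat p_{-1}$ must be read as $0$. That is consistent only because $q_{-1,x}p_0+p_0$ vanishes, by $q_{-1}=-x$.
\end{itemize}

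I expect this last step to be the main obstacle. It mixes levels $\alpha-2$, $\alpha-1$, $\alpha$ and $\alpha+1$, and the cancellation of the quadratic terms $q_{\alpha-1,x}q_{\alpha-2,x}$ depends on correctly using the formulas for $\tilde D_t(\hat p_0)$ and $\tilde D_z(\hat p_0)$ from the first step. I would organize that computation by collecting coefficients of $p_0$, $\tilde D_x(p_0)$, $\tilde D_y(p_0)$, $\tilde D_t(p_0)$ and $\tilde D_z(p_0)$ separately.

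Finally, the matrix form of $\hat{\mathcal R}_1$ is read off directly from \eqref{4DUHE_ro1_ff}.
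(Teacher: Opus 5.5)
Your overall route, substituting $\hat p_j=q_{j,x}p_0+p_{j+1}$ into \eqref{4DUHE-shad}--\eqref{4DUHE-comp2} and reducing with the covering relations, is the paper's ``direct computation''. Your treatment of \eqref{4DUHE-comp1} and \eqref{4DUHE-comp2} is correct, including the boundary case $\alpha=1$ via $q_{-1,x}=-1$. Replace $\tilde D_t(p_{\alpha+1})$ and $\tilde D_z(p_{\alpha+1})$ at level $\alpha+1$, and replace $\tilde D_t(\hat p_0)$ and $\tilde D_z(\hat p_0)$ by their reduced forms. After that, no products of $q$'s survive at all, so the ``quadratic'' obstacle you anticipate does not arise. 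The coefficient of $p_0$ is exactly your key identity, or its $z$-analogue.

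The step that fails as written is the shadow equation. With $A=u_{tx}p_0-u_tp_{0,x}+p_{0,y}$ and $B=u_{xz}p_0-u_zp_{0,x}+p_{0,t}$, imposing $\tilde D_z\tilde D_t(\hat p_0)=\tilde D_t\tilde D_z(\hat p_0)$ gives $\tilde D_z(A)-\tilde D_t(B)$. This is exactly the left-hand side of \eqref{4DUHE-shad} evaluated on $p_0$. Since $\hat p_0$ does not appear, it only expresses consistency of \eqref{4DUHE-ro1} and says nothing about $\hat p_0$ being a shadow.

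To fix this, note that every term of the linearized operator in \eqref{4DUHE-shad} contains a $\tilde D_t$ or a $\tilde D_z$. Its value on $\hat p_0$ is therefore $\tilde D_y(B)-\tilde D_t(A)-u_t\tilde D_x(B)+u_z\tilde D_x(A)+u_{tx}B-u_{xz}A$. Expanding this, all derivatives of $p_0$ other than $p_{0,x}$ cancel. The coefficient of $p_{0,x}$ is $-(u_{yz}-u_{tt}+u_zu_{tx}-u_tu_{xz})$, and the coefficient of $p_0$ is $\tilde D_x$ of that same expression. So it vanishes on $\mathcal E$, with no multiple of the equation for $p_0$ needed.

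Alternatively, your compatibility idea works one level up. Once \eqref{4DUHE-comp1}--\eqref{4DUHE-comp2} hold for $\hat P$ at $\alpha=1$, the identity $\tilde D_z\tilde D_t(\hat p_1)=\tilde D_t\tilde D_z(\hat p_1)$ reduces exactly to \eqref{4DUHE-shad} for $\hat p_0$. The shadow equation is thus implied by the component equations you verify.
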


\begin{remark}
Comparing Eq. \eqref{4DUHE_ro1_ff} with Eq. \eqref{4DMAS-ro1-ff} obtained for the 4D MAS equation \eqref{4DMAS}, we immediately see that the corresponding relations have exactly the same form. Thus, the full-fledged recursion operators $\mathcal R_1$ and $\hat{\mathcal R}_1$ are formally identical, the only difference being the meaning of the nonlocal variables $q_i$ in the respective coverings.
\end{remark}

\begin{remark}
\label{rem_sq}
The matrix representation of $\hat{\mathcal R}_1$ also allows us to compute its square explicitly:
\begin{equation*}
\hat{\mathcal R}_1^2= \hat{\mathcal R}_1 \cdot \hat{\mathcal R}_1=
\begin{pmatrix}
q_{0,x}^2+q_{1,x} & q_{0,x} & 1 & 0 & 0 & 0 & 0 & 0 & 0 & 0 & \ldots\\[2mm]
q_{1,x}u_x+q_{2,x} & q_{1,x} & 0 & 1 & 0 & 0 & 0 & 0 & 0 & 0 & \ldots\\[2mm]
q_{2,x}u_x+q_{3,x} & q_{2,x} & 0 & 0 & 1 & 0 & 0 & 0 & 0 & 0 & \ldots\\[2mm]
q_{3,x}u_x+q_{4,x} & q_{3,x} & 0 & 0 & 0 & 1 & 0 & 0 & 0 & 0 & \ldots \\[2mm]
q_{4,x}u_x+q_{5,x} & q_{4,x} & 0 & 0 & 0 & 0 & 1 & 0 & 0 & 0 & \ldots\\[2mm]
q_{5,x}u_x+q_{6,x} & q_{5,x} & 0 & 0 & 0 & 0 & 0 & 1 & 0 & 0 & \ldots\\[2mm]
q_{6,x}u_x+q_{7,x} & q_{6,x} & 0 & 0 & 0 & 0 & 0 & 0 & 1 & 0 & \ldots\\[2mm]
q_{7,x}u_x+q_{8,x} & q_{7,x} & 0 & 0 & 0 & 0 & 0 & 0 & 0 & 1 & \ldots\\[2mm]
\vdots & \vdots & \vdots & \vdots & \vdots & \vdots & \vdots & \vdots & \vdots & \vdots & \ddots
\end{pmatrix}
\end{equation*}
In particular, the first row of this matrix shows that the shadow component of $\hat{\mathcal R}_1^2 \cdot P$ is
$$
(q_{0,x}^2+q_{1,x})p_0+q_{0,x}p_1+p_2,
$$
which is precisely the shadow $\phi_2$ from Proposition 1. Thus, the shadow $\phi_2$ is precisely the shadow of $\hat{\mathcal R}_1^2\cdot P$, confirming that it is generated by the square $(\hat{\mathcal R}_1^{sh})^2$.
\end{remark}

The form of the shadow $\phi_3$ from Proposition 1 naturally suggests the relation
\begin{equation}
\label{R1R2-UHE}
\hat{\mathcal R}_2=y\hat{\mathcal R}_1^2+t\hat{\mathcal R}_1+z\mathcal I,
\end{equation}
where $\mathcal I$ denotes the identity operator. Combining the matrix representations of $\hat{\mathcal R}_1$ and $\hat{\mathcal R}_1^2$ obtained above then yields the following matrix representation of $\hat{\mathcal R}_2$:
\begin{equation*}
\hat{\mathcal R}_2=
\begin{pmatrix}
y(q_{0,x}^2+q_{1,x})+tq_{0,x}+z & yq_{0,x}+t & y & 0 & 0 & 0 & 0 & 0 & 0 & 0 & \ldots\\[2mm]
y(q_{1,x}q_{0,x}+q_{2,x})+tq_{1,x} & yq_{1,x}+z & t & y & 0 & 0 & 0 & 0 & 0 & 0 & \ldots\\[2mm]
y(q_{2,x}q_{0,x}+q_{3,x})+tq_{2,x} & yq_{2,x} & z & t & y & 0 & 0 & 0 & 0 & 0 & \ldots\\[2mm]
y(q_{3,x}q_{0,x}+q_{4,x})+tq_{3,x} & yq_{3,x} & 0 & z & t & y & 0 & 0 & 0 & 0 & \ldots\\[2mm]
y(q_{4,x}q_{0,x}+q_{5,x})+tq_{4,x} & yq_{4,x} & 0 & 0 & z & t & y & 0 & 0 & 0 & \ldots\\[2mm]
y(q_{5,x}q_{0,x}+q_{6,x})+tq_{5,x} & yq_{5,x} & 0 & 0 & 0 & z & t & y & 0 & 0 & \ldots\\[2mm]
y(q_{6,x}q_{0,x}+q_{7,x})+tq_{6,x} & yq_{6,x} & 0 & 0 & 0 & 0 & z & t & y & 0 & \ldots\\[2mm]
y(q_{7,x}q_{0,x}+q_{8,x})+tq_{7,x} & yq_{7,x} & 0 & 0 & 0 & 0 & 0 & z & t & y & \ldots\\[2mm]
\vdots & \vdots & \vdots & \vdots & \vdots & \vdots & \vdots & \vdots & \vdots & \vdots & \ddots
\end{pmatrix}.
\end{equation*}
The resulting operator $\hat {\mathcal R}_2$ can be verified directly to map full-fledged $\tau^q$-symmetries of the 4D UHE \eqref{4DUHE} to full-fledged $\tau^q$-symmetries by substituting its component formulas
\begin{align}
\label{4DUHE_ro2_ff}
\hat {\mathcal R}_2: \ \hat p_j = (yq_{j,x}q_{0,x}+yq_{j+1,x}+tq_{j,x})p_0+yq_{j,x}p_1+zp_j+tp_{j+1}+yp_{j+2}, \quad j \geq 0,
\end{align}
into the defining equations \eqref{4DUHE-shad}--\eqref{4DUHE-comp2}. This confirms that the operator $\hat{\mathcal R}_2$ defined by \eqref{R1R2-UHE} is a full-fledged recursion operator for the 4D UHE \eqref{4DUHE}.

\begin{remark}
Comparing the relation \eqref{R1R2-UHE} with the corresponding relation \eqref{R1R2} obtained for the 4D MAS equation \eqref{4DMAS} and the rYME \eqref{yme}, we observe that \eqref{R1R2-UHE} differs by the additional term $y\mathcal R_1^2$. This difference may be related to the fact that the 4D UHE \eqref{4DUHE} belongs to a different class in the classification of linearly degenerate integrable systems by Doubrov \textit{et al.} \cite{Dou19} than the 4D MAS equation \eqref{4DMAS} and the rYME \eqref{yme}, see Tables 2 and 3 therein.
\end{remark}

\section{Computation of the Nijenhuis torsion}

The full-fledged form of recursion operators developed above was motivated not only by the need to describe their action on full-fledged symmetries, but also by a more fundamental question: how can one effectively compute the Nijenhuis torsion of such operators, especially in the multidimensional case? Since the vanishing of the Nijenhuis torsion is closely related to the hereditary property of a recursion operator and, consequently, to the generation of hierarchies of commuting symmetries, its explicit computation is of particular importance. For recursion operators of integrable equations with more than two independent variables, however, this question is far from straightforward.

We regard the full-fledged recursion operators as endomorphisms of the $\mathbb R$-algebra $\mathcal A^{\mathbb N}$, which contains the set of generating functions of all full-fledged $\tau^q$-symmetries as an invariant subset. We compute their Nijenhuis torsion on the entire algebra $\mathcal A^{\mathbb N}$. This is sufficient for our purposes: if the Nijenhuis torsion vanishes on all of $\mathcal A^{\mathbb N}$, then its restriction to the invariant subset of generating functions of $\tau^q$-symmetries vanishes as well. We shall apply this approach to the recursion operators obtained for the 4D MAS equation \eqref{4DMAS} and the 4D UHE \eqref{4DUHE}.

\begin{proposition}
\label{R1_heredity}
The full-fledged recursion operators $\mathcal R_1$ for the 4D MAS equation \eqref{4DMAS} and $\hat{\mathcal R}_1$ for the 4D UHE \eqref{4DUHE} are hereditary, i.e., their Nijenhuis torsion \eqref{n_tor} vanishes identically on the $\mathbb R$-algebra $\mathcal A^{\mathbb N}$.
\end{proposition}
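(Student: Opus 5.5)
The plan is to verify the defining formula \eqref{n_tor} componentwise on arbitrary sequences $P=[p_0,p_1,\ldots]$ and $Q=[q'_0,q'_1,\ldots]$ in $\mathcal A^{\mathbb N}$. The Jacobi bracket on the covering is
\[
\{P,Q\}_j=\mathbf E_P(q'_j)-\mathbf E_Q(p_j),
\]
where $\mathbf E_P$ is the evolutionary derivation on $\tilde{\mathcal E}^q$. It acts on $u_\sigma$ by $\tilde D_\sigma(p_0)$ and on the nonlocal jet coordinates $q_{i,\sigma}$ by $\tilde D_\sigma(p_i)$. Only two facts about this derivation will be used:
\begin{itemize}
\item $\mathbf E_P(q_{j,x})=\tilde D_x(p_j)$ for all $j\ge0$, with $q_0=u$;
\item $\mathbf E_P$ commutes with the total derivatives $\tilde D_x$ and is $\mathbb R$-linear in $P$.
\end{itemize}
The operators $\mathcal R_1$ and $\hat{\mathcal R}_1$ are given by the same formula \eqref{4DMAS-ro1-ff}, resp.\ \eqref{4DUHE_ro1_ff}, and the specific defining relations of the coverings \eqref{mas_cov} and \eqref{4DUHE_cov} never enter. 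Hence a single computation proves both cases.

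\textbf{Step 1: decomposition and Lie derivative.} Write $\mathcal R=\mathcal S+\mathcal K$, where $(\mathcal S P)_j=p_{j+1}$ is the constant shift and $(\mathcal K P)_j=q_{j,x}p_0$. Rewrite the torsion through the Lie derivative $(\mathcal L_P\mathcal R)Q=\{P,\mathcal RQ\}-\mathcal R\{P,Q\}$ as
\[
N_{\mathcal R}(P,Q)=(\mathcal L_{\mathcal RP}\mathcal R)Q-\mathcal R\,(\mathcal L_P\mathcal R)Q .
\]
Expanding $\mathbf E_P(q_{j,x}q'_0+q'_{j+1})$ by the Leibniz rule, the terms $\mathbf E_P(q'_\bullet)$ cancel against $\mathcal R$ applied to the first half of the bracket. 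What remains is the explicit formula
\[
\bigl((\mathcal L_P\mathcal R)Q\bigr)_j=\tilde D_x(p_j)\,q'_0+q_{j,x}\,\mathbf E_Q(p_0)+\mathbf E_Q(p_{j+1})-\mathbf E_{\mathcal RQ}(p_j).
\]

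\textbf{Step 2: the field $\mathbf E_{\mathcal RQ}$.} This derivation has components $\tilde D_\sigma(q_{i,x}q'_0+q'_{i+1})$ on $q_{i,\sigma}$, and the analogous component on $u_\sigma$. I will split it as $\mathbf E_{\mathcal RQ}=\mathbf E_{\mathcal SQ}+\mathbf E_{\mathcal KQ}$. The part $\mathbf E_{\mathcal SQ}$ is the "shifted'' field. The part $\mathbf E_{\mathcal KQ}$ is expanded via the Leibniz rule on $\tilde D_\sigma(q_{i,x}q'_0)$.

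\textbf{Step 3: substitute and cancel.} Substitute Step 1 into both $(\mathcal L_{\mathcal RP}\mathcal R)Q$ and $\mathcal R(\mathcal L_P\mathcal R)Q$, and sort the terms of component $j$ into three groups:
\begin{itemize}
\item terms with no derivation acting on coefficients, such as $\tilde D_x(p_{j+1})q'_0$ and $q_{j,x}\tilde D_x(p_0)q'_0$;
\item terms where a field $\mathbf E$ hits a component $p_i$;
\item terms where a field $\mathbf E$ hits a coefficient $q_{i,x}$, which produce $\tilde D_x(\cdot)$ of a component.
\end{itemize}
The terms $\mathbf E_{\mathcal RQ}(\mathbf E\ldots)$ will be rewritten using $\mathbf E_{\mathcal RQ}(q_{i,x})=\tilde D_x(q_{i,x}q'_0+q'_{i+1})$. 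I expect each group to cancel pairwise, with the pairing reflecting the identity $\mathcal R\mathcal S$-versus-$\mathcal S\mathcal R$ on the shift structure.

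\textbf{Main obstacle.} The hard part is handling the second-order expressions of the form $\mathbf E_{\mathcal RP}(p_j)$ and $\mathbf E_{\mathcal RQ}(\cdot)$ applied to arbitrary $p_j\in\mathcal A$, which depend on all jet coordinates. I will treat them abstractly, without expanding in coordinates, by using the commutator identity $[\mathbf E_P,\mathbf E_Q]=\mathbf E_{\{P,Q\}}$ together with Leibniz. If this abstract bookkeeping becomes unwieldy, the fallback is to check the cancellation component by component: first for $j=0$, then show that the general-$j$ expression is the same with indices shifted, since $\mathcal R$ acts uniformly in $j$.
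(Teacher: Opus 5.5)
Your plan is correct. It uses the same mechanism as the paper's proof: a direct componentwise evaluation of the torsion on arbitrary elements of $\mathcal A^{\mathbb N}$. The only input about the bracket is that $\mathbf E_P$ acts on the coefficient $q_{j,x}$ as $\tilde D_x p_j$, so the covering relations never enter.

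The difference is bookkeeping. The paper writes each bracket in coordinates, $\sum_{\sigma,k}\tilde D_\sigma(\cdot)_k\,\partial(\cdot)_j/\partial q_{k,\sigma}$, and regroups the sums as $\sum_{\sigma,k}\tilde D_\sigma(\mathcal RP)_k\,(\mathcal R\,\partial Q/\partial q_{k,\sigma})_j$ and its counterparts. It is then left with the residual identity $q'_0\tilde D_x(\mathcal RP)_j-p_0\tilde D_x(\mathcal RQ)_j=q'_0(\mathcal R\tilde D_xP)_j-p_0(\mathcal R\tilde D_xQ)_j$, which holds because the $q_{j,xx}p_0q'_0$ terms cancel. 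You instead use $N_{\mathcal R}=(\mathcal L_{\mathcal RP}\mathcal R)Q-\mathcal R(\mathcal L_P\mathcal R)Q$ with a closed formula for $\mathcal L_P\mathcal R$. This is coordinate-free, avoids the $(\sigma,k)$-sums, and carries over directly to the Fr\"olicher--Nijenhuis bracket, which equals $(\mathcal L_{\mathcal R_1P}\mathcal R_2)Q+(\mathcal L_{\mathcal R_2P}\mathcal R_1)Q-\mathcal R_1(\mathcal L_P\mathcal R_2)Q-\mathcal R_2(\mathcal L_P\mathcal R_1)Q$.

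Step 3, however, is only a few lines, and you should simply carry it out. Put $a_i=(\mathcal RP)_i$, $b_i=(\mathcal RQ)_i$, $c_i=((\mathcal L_P\mathcal R)Q)_i$. Apply your Step 1 formula with $P$ replaced by $\mathcal RP$, using only the Leibniz rule, $\mathbf E_Q(q_{i,x})=\tilde D_x q'_i$ and $\mathbf E_{\mathcal RQ}(q_{j,x})=\tilde D_x b_j$. The terms $q_{j,xx}p_0q'_0$, $q_{j,x}p_0\tilde D_x q'_0$ and $p_0\tilde D_x q'_{j+1}$ cancel (this is exactly the paper's residual identity), leaving
\[
\begin{aligned}
\bigl((\mathcal L_{\mathcal RP}\mathcal R)Q\bigr)_j
&=q_{j,x}\bigl(q'_0\tilde D_x p_0+q_{0,x}\mathbf E_Q(p_0)+\mathbf E_Q(p_1)-\mathbf E_{\mathcal RQ}(p_0)\bigr)\\
&\quad+q'_0\tilde D_x p_{j+1}+q_{j+1,x}\mathbf E_Q(p_0)+\mathbf E_Q(p_{j+2})-\mathbf E_{\mathcal RQ}(p_{j+1}).
\end{aligned}
\]
This is literally $q_{j,x}c_0+c_{j+1}=(\mathcal R(\mathcal L_P\mathcal R)Q)_j$. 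The quantities $\mathbf E_Q(p_i)$ and $\mathbf E_{\mathcal RQ}(p_i)$ match as black boxes. Every bracket is first order, so no composition of two evolutionary fields ever arises, and your anticipated ``main obstacle'' does not exist. Consequently you do not need the splitting $\mathbf E_{\mathcal RQ}=\mathbf E_{\mathcal SQ}+\mathbf E_{\mathcal KQ}$ with its Leibniz expansion of $\tilde D_\sigma(q_{i,x}q'_0)$, the $j=0$ fallback, or even the commutation of $\mathbf E_P$ with $\tilde D_x$.

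More importantly, do not invoke $[\mathbf E_P,\mathbf E_Q]=\mathbf E_{\{P,Q\}}$. For arbitrary $P,Q\in\mathcal A^{\mathbb N}$, $\mathbf E_P$ need not commute with all total derivatives on $\tilde{\mathcal E}^q$; commuting with them is essentially the $\tau^q$-symmetry condition. So that identity is not available on the whole algebra, and relying on it would at best prove the claim on the invariant subset of symmetries, not on all of $\mathcal A^{\mathbb N}$.
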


\begin{proof}
Since the operators $\mathcal R_1$ and $\hat{\mathcal R}_1$ have the same form and the defining equations for the corresponding nonlocal variables do not enter the computation, it suffices to show that the Nijenhuis torsion \eqref{n_tor} vanishes for one of them.

Let
$$P=[p_0,p_1,p_2,\ldots],\qquad S=[s_0,s_1,s_2,\ldots]$$
be arbitrary elements of $\mathcal A^{\Bbb N}$. Recall that throughout this paper, the components of all elements of $\mathcal A^{\Bbb N}$ are indexed from $0$. We fix $j\geq 0$ and show that the $j$-th component $N_{\mathcal R_1}(P,S)_j$ of the Nijenhuis torsion vanishes identically. 

By \eqref{4DMAS-ro1-ff} we have
$$(\mathcal R_1 \cdot P)_j=q_{j,x}p_0+p_{j+1},$$
where we use the convention $q_0=u$ introduced in \eqref{mas_cov}. Using \eqref{evf} and \eqref{sec1:2}, we obtain
\begin{align*}
\left\{ P,S \right\}_j = \sum \limits_{\sigma, k} \left( \tilde D_\sigma p_k \pd{s_j}{q_{k,\sigma}} - \tilde D_\sigma s_k \pd{p_j}{q_{k,\sigma}} \right),
\end{align*}
where $\tilde D_\sigma$ denotes the corresponding total derivative operator extended to the covering \eqref{mas_cov}. Throughout the computations that follow, the summation indices $\sigma$ and $k$ range over all multiindices and all integers $k\geq 0$, respectively.

Applying \eqref{4DMAS-ro1-ff} to the Jacobi bracket $\left\{P,S\right\}$, we obtain
\begin{equation}
\label{R_PS}
\begin{aligned}
(\mathcal R_1 \cdot \left\{ P,S \right\})_j &= q_{j,x} \sum \limits_{\sigma, k} \left( \tilde D_\sigma p_k \pd{s_0}{q_{k,\sigma}} - \tilde D_\sigma s_k \pd{p_0}{q_{k,\sigma}} \right) + \sum \limits_{\sigma, k} \left( \tilde D_\sigma p_k \pd{s_{j+1}}{q_{k,\sigma}} - \tilde D_\sigma s_k \pd{p_{j+1}}{q_{k,\sigma}} \right)\\[2mm]
&= \sum \limits_{\sigma, k} \tilde D_\sigma p_k \left( q_{j,x} \pd{s_0}{q_{k,\sigma}} + \pd{s_{j+1}}{q_{k,\sigma}} \right) - \tilde D_\sigma s_k \left(q_{j,x} \pd{p_0}{q_{k,\sigma}} + \pd{p_{j+1}}{q_{k,\sigma}} \right)\\[2mm]
&= \sum \limits_{\sigma, k} \tilde D_\sigma p_k \left( \mathcal R_1 \cdot \pd{S}{q_{k,\sigma}} \right)_j - \sum \limits_{\sigma, k} \tilde D_\sigma s_k \left( \mathcal R_1 \cdot \pd{P}{q_{k,\sigma}} \right)_j.
\end{aligned}
\end{equation}

Similarly, using the above formula for $(\mathcal R_1 \cdot P)_j$, we have
\begin{equation}
\label{RP_S}
\begin{aligned}
\left\{\mathcal R_1 \cdot P,S \right\}_j =&  \sum \limits_{\sigma, k} \left( \tilde D_\sigma (\mathcal R_1 \cdot P)_k \pd{s_j}{q_{k,\sigma}} -  \tilde D_\sigma s_k \pd{(\mathcal R_1 \cdot P)_j}{q_{k,\sigma}} \right)\\[2mm]
=&\sum \limits_{\sigma, k} \tilde D_\sigma (\mathcal R_1 \cdot P)_k \pd{s_j}{q_{k,\sigma}} - \sum \limits_{\sigma, k} \tilde D_\sigma s_k \pd{(q_{j,x}p_0+p_{j+1})}{q_{k,\sigma}}\\[2mm]
=& \sum \limits_{\sigma, k} \tilde D_\sigma (\mathcal R_1 \cdot P)_k \pd{s_j}{q_{k,\sigma}} - \sum \limits_{\sigma, k} \tilde D_\sigma s_k \left( q_{j,x}\pd{p_0}{q_{k,\sigma}} + \pd{p_{j+1}}{q_{k,\sigma}}\right) - p_0 \tilde D_x s_j\\[2mm]
=& \sum \limits_{\sigma, k} \tilde D_\sigma (\mathcal R_1 \cdot P)_k \pd{s_j}{q_{k,\sigma}} - \sum \limits_{\sigma, k} \tilde D_\sigma s_k \left( \mathcal R_1 \cdot \pd{P}{q_{k,\sigma}} \right)_j - p_0 \tilde D_x s_j,
\end{aligned}
\end{equation}
resp.
\begin{equation}
\label{P_RS}
\begin{aligned}
\left\{P,\mathcal R_1 \cdot S \right\}_j =&  \sum \limits_{\sigma, k} \left( \tilde D_\sigma p_k \pd{(\mathcal R_1 \cdot S)_j}{q_{k,\sigma}} -  \tilde D_\sigma (\mathcal R_1 \cdot S)_k \pd{p_j}{q_{k,\sigma}} \right)\\[2mm]
=& \sum \limits_{\sigma, k} \tilde D_\sigma p_k \pd{(q_{j,x}s_0+s_{j+1})}{q_{k,\sigma}} -  \sum \limits_{\sigma, k} \tilde D_\sigma (\mathcal R_1 \cdot S)_k \pd{p_j}{q_{k,\sigma}}\\[2mm]
=& \sum \limits_{\sigma, k} \tilde D_\sigma p_k \left( q_{j,x}\pd{s_0}{q_{k,\sigma}} + \pd{s_{j+1}}{q_{k,\sigma}}\right) + s_0 \tilde D_x p_j -  \sum \limits_{\sigma, k} \tilde D_\sigma (\mathcal R_1 \cdot S)_k \pd{p_j}{q_{k,\sigma}}\\[2mm]
=& \sum \limits_{\sigma, k} \tilde D_\sigma p_k \left( \mathcal R_1 \cdot \pd{S}{q_{k,\sigma}} \right)_j - \sum \limits_{\sigma, k} \tilde D_\sigma (\mathcal R_1 \cdot S)_k \pd{p_j}{q_{k,\sigma}} + s_0 \tilde D_x p_j.
\end{aligned}
\end{equation}

Thus, combining \eqref{R_PS}--\eqref{P_RS}, we arrive at
\begin{equation}
\label{RP_S+P_RS-R_PS}
\begin{aligned}
& \left\{\mathcal R_1 \cdot P,S \right\}_j + \left\{P,\mathcal R_1\cdot S \right\}_j - (\mathcal R_1\cdot \left\{ P,S \right\})_j\\[2mm] 
& \hspace{5mm }=  \sum \limits_{\sigma, k} \tilde D_\sigma (\mathcal R_1 \cdot P)_k \pd{s_j}{q_{k,\sigma}}  -
\sum \limits_{\sigma, k}  \tilde D_\sigma (\mathcal R_1 \cdot S)_k \pd{p_j}{q_{k,\sigma}} - p_0 \tilde D_x s_j  + s_0 \tilde D_x p_j,
\end{aligned}
\end{equation}
and applying $\mathcal R_1$ to the expression in \eqref{RP_S+P_RS-R_PS}, we obtain
\begin{equation}
\label{R_RP_S+P_RS-R_PS}
\begin{aligned}
& \left( \mathcal R_1 \cdot \left( \left\{\mathcal R_1\cdot P,S \right\} + \left\{P,\mathcal R_1\cdot S \right\} - \mathcal R_1\cdot \left\{ P,S \right\} \right) \right)_j\\[2mm] 
&\hspace{5mm} = q_{j,x} \left( \sum \limits_{\sigma, k} \tilde D_\sigma (\mathcal R_1\cdot P)_k \pd{s_0}{q_{k,\sigma}}  -
\sum \limits_{\sigma, k}  \tilde D_\sigma (\mathcal R_1 \cdot S)_k \pd{p_0}{q_{k,\sigma}} - p_0 \tilde D_x s_0 + s_0 \tilde D_x p_0 \right)\\[2mm]
& \hspace{10mm} + \sum \limits_{\sigma, k} \tilde D_\sigma (\mathcal R_1 \cdot P)_k \pd{s_{j+1}}{q_{k,\sigma}} -
\sum \limits_{\sigma, k}  \tilde D_\sigma (\mathcal R_1 \cdot S)_k \pd{p_{j+1}}{q_{k,\sigma}}  - p_0 \tilde D_x s_{j+1} + s_0 \tilde D_x p_{j+1}\\[2mm]
&\hspace{5mm} = \sum \limits_{\sigma, k} \tilde D_\sigma (\mathcal R_1  \cdot P)_k \left(q_{j,x} \pd{s_0}{q_{k,\sigma}} +  \pd{s_{j+1}}{q_{k,\sigma}} \right)  -
\sum \limits_{\sigma, k}  \tilde D_\sigma (\mathcal R_1\cdot S)_k \left( q_{j,x} \pd{p_0}{q_{k,\sigma}} +  \pd{p_{j+1}}{q_{k,\sigma}} \right)\\[2mm] 
&\hspace{10mm} - p_0 \left(q_{j,x} \tilde D_x s_0 + \tilde D_x s_{j+1}\right) + s_0 \left(q_{j,x} \tilde D_x p_0 + \tilde D_x p_{j+1}\right)\\[2mm]
&\hspace{5mm} = \sum \limits_{\sigma, k} \tilde D_\sigma (\mathcal R_1  \cdot P)_k \left(\mathcal R_1 \cdot  \pd{S}{q_{k,\sigma}} \right)_j  -
\sum \limits_{\sigma, k}  \tilde D_\sigma (\mathcal R_1\cdot S)_k  \left(\mathcal R_1 \cdot  \pd{P}{q_{k,\sigma}} \right)_j\\[2mm] 
&\hspace{10mm} - p_0 \left( \mathcal R_1 \cdot \tilde D_x S \right)_j + s_0 \left( \mathcal R_1 \cdot \tilde D_x P \right)_j.
\end{aligned}
\end{equation}
Here and below, total derivatives of elements of $\mathcal A^{\mathbb N}$ are understood componentwise.

On the other hand, a direct computation gives
\begin{equation}
\label{RP_RS}
\begin{aligned}
&\left\{ \mathcal R_1\cdot P,\mathcal R_1\cdot S \right\}_j =  \sum \limits_{\sigma, k} \left( \tilde D_\sigma (\mathcal R_1 \cdot P)_k \pd{(\mathcal R_1 \cdot S)_j}{q_{k,\sigma}} - \tilde D_\sigma (\mathcal R_1 \cdot S)_k \pd{(\mathcal R_1 \cdot P)_j}{q_{k,\sigma}} \right)\\[2mm]
&\hspace{5mm} = \sum \limits_{\sigma, k} \tilde D_\sigma (\mathcal R_1 \cdot P)_k \pd{(q_{j,x}s_0+s_{j+1})}{q_{k,\sigma}} - \sum \limits_{\sigma, k} \tilde D_\sigma (\mathcal R_1\cdot S)_k \pd{(q_{j,x}p_0+p_{j+1})}{q_{k,\sigma}}\\[2mm]
&\hspace{5mm} =  \sum \limits_{\sigma, k} \tilde D_\sigma (\mathcal R_1 \cdot P)_k \left(q_{j,x} \pd{s_0}{q_{k,\sigma}} + \pd{s_{j+1}}{q_{k,\sigma}}\right) + s_0\tilde D_x (\mathcal R_1\cdot P)_j\\[2mm]
&\hspace{10mm} - \sum \limits_{\sigma, k} \tilde D_\sigma (\mathcal R_1 \cdot S)_k \left(q_{j,x} \pd{p_0}{q_{k,\sigma}} + \pd{p_{j+1}}{q_{k,\sigma}}\right) - p_0\tilde D_x (\mathcal R_1 \cdot S)_j\\[2mm]
&\hspace{5mm} = \sum \limits_{\sigma, k} \tilde D_\sigma (\mathcal R_1  \cdot P)_k \left(\mathcal R_1 \cdot  \pd{S}{q_{k,\sigma}} \right)_j  - \sum \limits_{\sigma, k}  \tilde D_\sigma (\mathcal R_1\cdot S)_k  \left(\mathcal R_1 \cdot  \pd{P}{q_{k,\sigma}} \right)_j\\[2mm]
&\hspace{10mm}  - p_0\tilde D_x \left(\mathcal R_1 \cdot S\right)_j + s_0\tilde D_x \left(\mathcal R_1\cdot P\right)_j.
\end{aligned}
\end{equation}

The comparison of \eqref{R_RP_S+P_RS-R_PS} and \eqref{RP_RS} reduces the proof to the identity
$$- p_0\tilde D_x \left(\mathcal R_1 \cdot S\right)_j + s_0\tilde D_x \left(\mathcal R_1\cdot P\right)_j = - p_0 \left( \mathcal R_1 \cdot \tilde D_x S \right)_j + s_0 \left( \mathcal R_1 \cdot \tilde D_x P \right)_j$$
which follows directly from the component formula \eqref{4DMAS-ro1-ff} for $\mathcal R_1$. Indeed
\begin{equation}
\begin{aligned}
&- p_0\tilde D_x \left(\mathcal R_1 \cdot S\right)_j + s_0\tilde D_x \left(\mathcal R_1\cdot P\right)_j = -p_0 \tilde D_x \left(q_{j,x} s_0 + s_{j+1}\right) + s_0\tilde D_x \left(q_{j,x} p_0 + p_{j+1}\right) \\[2mm]
& \hspace{5mm} =-q_{j,xx}p_0 s_0 - p_0 (q_{j,x} \tilde D_x s_0 + \tilde D_x s_{j+1}) + q_{j,xx}p_0 s_0 + s_0 (q_{j,x} \tilde D_x p_0 + \tilde D_x p_{j+1})\\
& \hspace{5mm} = - p_0 \left( \mathcal R_1 \cdot \tilde D_x S \right)_j + s_0 \left( \mathcal R_1 \cdot \tilde D_x P \right)_j.
\end{aligned}
\end{equation}

Hence,
$$ N_{\mathcal R_1}(P,S)_j=0.$$
Since $P,S\in\mathcal A^{\Bbb N}$ and $j\geq0$ were arbitrary, the Nijenhuis torsion $N_{\mathcal R_1}$ vanishes identically on the $\mathbb R$-algebra $\mathcal A^{\Bbb N}$.
\end{proof}

We next apply the same method to the second pair of full-fledged recursion operators.

\begin{proposition}
\label{R2_heredity}
The full-fledged recursion operators 
$$\mathcal R_2 = t\mathcal R_1+z\mathcal I$$ 
for the 4D MAS equation \eqref{4DMAS} and 
$$\hat{\mathcal R}_2 = y \hat{\mathcal R}_1^2+t \hat{\mathcal R}_1+z\mathcal I$$
for the 4D UHE \eqref{4DUHE} are hereditary, i.e., their Nijenhuis torsion \eqref{n_tor} vanishes identically on the $\mathbb R$-algebra $\mathcal A^{\mathbb N}$.
\end{proposition}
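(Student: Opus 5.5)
Once Proposition \ref{R1_heredity} is available, the natural route is to avoid redoing the direct computation. We treat $\mathcal R_2$ and $\hat{\mathcal R}_2$ as polynomials in a hereditary operator, with coefficients that are functions of the independent variables only. The classical algebraic fact is that if $N_{\mathcal R}=0$, then every polynomial $\sum_i c_i\mathcal R^i$ with constant coefficients has vanishing torsion. Here, however, the coefficients are $t,y,z$. So the plan is to re-derive the needed identities in the present setting rather than quote that fact. The key point is that $t$, $y$, $z$ are independent variables and not among the fibre coordinates $q_{k,\sigma}$ (nor $u_\sigma$). Hence for $f\in\{t,y,z,1\}$ and any $P,S$ we have
\begin{equation*}
\{fP,S\}=f\{P,S\}-\sum_{\sigma,k}\tilde D_\sigma(f) s_k\,\partial_{q_{k,\sigma}}P\cdot 0+\ldots ,
\end{equation*}
and more precisely $\{fP,S\}=f\{P,S\}$ for the part differentiating $S$. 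The extra term $\sum_\sigma \tilde D_\sigma(f p_k)\partial S/\partial q_{k,\sigma}$ picks up derivatives $\tilde D_\sigma f$. These derivatives are nonzero only for first-order $\sigma$ in the corresponding variable, and they produce terms of the form $p_k\,\partial S/\partial q_{k,\sigma+1_t}$ and similar. So the brackets are not $C^\infty(t,y,z)$-bilinear, and a purely formal argument fails. I therefore plan to follow the computational scheme of the proof of Proposition \ref{R1_heredity} directly.

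Concretely, write $\mathcal R=\hat{\mathcal R}_2$ as a matrix with entries $r_{jk}$. I will prove the identity
\begin{equation*}
(\mathcal R\cdot\{P,S\})_j=\sum_{\sigma,k}\tilde D_\sigma p_k\Big(\mathcal R\cdot\pd{S}{q_{k,\sigma}}\Big)_j-\sum_{\sigma,k}\tilde D_\sigma s_k\Big(\mathcal R\cdot\pd{P}{q_{k,\sigma}}\Big)_j ,
\end{equation*}
which is immediate because $\mathcal R$ acts linearly with coefficients not involving $P,S$. Next, in $\{\mathcal R P,S\}$ and $\{P,\mathcal R S\}$, the Leibniz rule yields correction terms $\sum_{\sigma,k}\tilde D_\sigma s_k\,(\partial r_{jl}/\partial q_{k,\sigma})\,p_l$. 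Since $r_{jl}$ depends on the nonlocal variables only through $q_{m,x}$ (and through polynomials in $q_{0,x}$, $q_{m,x}$ in the $\hat{\mathcal R}_2$ case), only the $\sigma=x$ derivatives survive. The corrections are therefore of the form $\sum_m(\partial r_{jl}/\partial q_{m,x})p_l\tilde D_x s_m$, generalizing the single terms $p_0\tilde D_xs_j$ that appeared for $\mathcal R_1$. After subtracting, the main sums recombine exactly as in \eqref{RP_S+P_RS-R_PS}–\eqref{RP_RS}. This reduces $N_{\mathcal R}(P,S)_j=0$ to an algebraic identity among these correction terms, namely
\begin{equation*}
\sum_{l,m}\Big(r_{jm}\frac{\partial r_{m l}}{\partial q_{n,x}}-\ldots\Big)
\end{equation*}
In invariant form, this is the statement that the bilinear map $B(P,S)=\sum(\partial\mathcal R/\partial q_{m,x})P\,\tilde D_x s_m$ satisfies
\begin{equation*}
B(\mathcal RP,S)-B(\mathcal RS,P)-\mathcal R\big(B(P,S)-B(S,P)\big)\quad\text{matches}\quad \text{the }\tilde D_x\mathcal R\text{ terms.}
\end{equation*}

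For $\mathcal R_2=t\mathcal R_1+z\mathcal I$ this is easy. The $q$-dependence is $t$ times that of $\mathcal R_1$, and $z\mathcal I$ commutes with everything and contributes nothing to the torsion. Because $t$ is not a fibre coordinate, every correction term is just $t$ or $t^2$ times the corresponding term for $\mathcal R_1$, and the identity closing the proof of Proposition \ref{R1_heredity} (the $q_{j,xx}p_0s_0$ cancellation) applies verbatim. Equivalently, $N_{t\mathcal R_1+z\mathcal I}=t^2N_{\mathcal R_1}=0$.

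The main obstacle is $\hat{\mathcal R}_2$, because of the quadratic term $y\hat{\mathcal R}_1^2$, whose entries contain the products $q_{j,x}q_{0,x}$. My approach would be to prove, by the same bookkeeping, the general identity $N_{a\mathcal R+b\mathcal R^2+c\mathcal I}=\ldots$ expressed through $N_{\mathcal R}$. I would first verify that $N_{\hat{\mathcal R}_1^2}$ vanishes by writing $\hat{\mathcal R}_1^2\cdot P$ explicitly: it equals $q_{j,x}(q_{0,x}p_0+p_1)+q_{j+1,x}p_0+p_{j+2}$. I would then use the component formula \eqref{4DUHE_ro2_ff} and check that the residual $\tilde D_x$-terms cancel, exactly as with $q_{j,xx}p_0s_0$, now with additional cross terms $q_{0,xx}$ and $q_{j,x}q_{0,x}$. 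These terms cancel pairwise by the antisymmetry in $P\leftrightarrow S$. Keeping track of these cross terms is where I expect the bulk of the work to lie.
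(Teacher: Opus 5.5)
Your reduction is the same as the paper's. The only asymmetric contributions come from differentiating the entries of the operator with respect to $q_{m,x}$, and the large sums cancel as in the proof of Proposition~\ref{R1_heredity}. For $\mathcal R_2$, your observation $N_{t\mathcal R_1+z\mathcal I}=t^2N_{\mathcal R_1}$ is correct; it is exactly what the paper's ``complete analogy'' amounts to.

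The gap is in the $\hat{\mathcal R}_2$ case. This is the real content of the statement, and you leave it as an unexecuted plan with the wrong predicted mechanism. Write $c(P)=yq_{0,x}p_0+tp_0+yp_1$ and $\mathbf E_S=\sum_{\sigma,k}\tilde D_\sigma(s_k)\,\partial/\partial q_{k,\sigma}$, applied entrywise. The correction term $\mathbf E_S(\hat{\mathcal R}_2)P$ then has $j$-th component $c(P)\tilde D_x s_j+yq_{j,x}p_0\tilde D_x s_0+yp_0\tilde D_x s_{j+1}$. After moving $\tilde D_x$ past $\hat{\mathcal R}_2$, the symmetric $p_0s_0$ pieces do die by antisymmetry. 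What survives from the three blocks is:
\begin{itemize}
\item $y^2q_{j,x}q_{0,xx}(s_0p_1-p_0s_1)$,
\item $-y^2(q_{j,x}q_{0,xx}+q_{j+1,xx})(s_0p_1-p_0s_1)$,
\item $y^2q_{j+1,xx}(s_0p_1-p_0s_1)$.
\end{itemize}
Each of these is already antisymmetric in $P\leftrightarrow S$, so antisymmetry cannot kill them. They vanish only because their coefficients add up to zero across the three blocks. That cancellation is precisely the identity $L_1+L_2+L_3=K_1+K_2+K_3$ the paper verifies, and your proposal neither states nor anticipates it.

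Two further problems. Checking $N_{\hat{\mathcal R}_1^2}=0$ alone would not give $N_{\hat{\mathcal R}_2}=0$, because of the cross term $yt\,[\hat{\mathcal R}_1^2,\hat{\mathcal R}_1]_{\mathrm{FN}}$. Also, your ``invariant form'' has the arguments of $B$ interchanged: with $B(P,S)=\mathbf E_S(\mathcal R)P$ one has exactly $N_{\mathcal R}(P,S)=B(S,\mathcal R P)-B(P,\mathcal R S)+\mathcal R\bigl(B(P,S)-B(S,P)\bigr)$.

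The cheapest repair is to push your own $\mathcal R_2$ argument rather than abandon the formal route. Your objection that $\{\cdot,\cdot\}$ is not $C^\infty(t,y,z)$-bilinear is true but harmless here. In the formula above, $\mathbf E_Q(\hat{\mathcal R}_1^k)$ depends on $Q$ only through $\tilde D_xQ$, and both $\tilde D_x$ and $\partial/\partial q_{k,\sigma}$ annihilate $y,t,z$. So these functions behave exactly like constants, $z\mathcal I$ drops out, and
\[
N_{\hat{\mathcal R}_2}=y^2N_{\hat{\mathcal R}_1^2}+yt\,[\hat{\mathcal R}_1^2,\hat{\mathcal R}_1]_{\mathrm{FN}}+t^2N_{\hat{\mathcal R}_1}.
\]
The identities
\[
N_{L^2}(P,S)=N_L(LP,LS)+LN_L(LP,S)+LN_L(P,LS)+L^2N_L(P,S),
\]
\[
[L,L^2]_{\mathrm{FN}}(P,S)=N_L(LP,S)+N_L(P,LS)+2LN_L(P,S)
\]
use only $\mathbb R$-bilinearity of the bracket. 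Proposition~\ref{R1_heredity} gives $N_{\hat{\mathcal R}_1}=0$ on all of $\mathcal A^{\mathbb N}$, so $N_{\hat{\mathcal R}_2}=0$ follows with no component bookkeeping. That would be a more structural proof than the paper's. Otherwise, you must actually carry out the three-block computation and exhibit the cross-block cancellation above.
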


\begin{proof}
The proof for the operator $\mathcal R_2$ proceeds in the same way as that for $\mathcal R_1$ in Proposition \ref{R1_heredity}. From its component formula \eqref{4DMAS-ro2-ff}, we have
$$(\mathcal R_2 \cdot P)_j=tq_{j,x}p_0+zp_j+tp_{j+1}.$$
Repeating the above calculation, the vanishing of the Nijenhuis torsion reduces to the identity
\begin{equation}
- p_0 \tilde D_x (\mathcal R_2 \cdot S)_j + s_0 \tilde D_x (\mathcal R_2 \cdot P)_j
=
- p_0 (\mathcal R_2 \cdot \tilde D_x S)_j
+ s_0 (\mathcal R_2 \cdot \tilde D_x P)_j,
\end{equation}
which is verified directly from the component formula for $\mathcal R_2$, in complete analogy with the corresponding identity for $\mathcal R_1$.

The same calculation can be carried out for $\hat{\mathcal R}_2$, although its more complicated component formula leads to a less immediate verification. According to \eqref{4DUHE_ro2_ff}, we have
\begin{equation}
\label{RR2_comp}
(\hat{\mathcal R}_2 \cdot P)_j=(yq_{j,x}q_{0,x}+yq_{j+1,x}+tq_{j,x})p_0+yq_{j,x}p_1+zp_j+tp_{j+1}+yp_{j+2}.
\end{equation}
Substituting this expression into the calculation above reduces the verification of the vanishing of the Nijenhuis torsion $N_{\hat{\mathcal R}_2}$ to the more involved identity
\begin{equation}
\label{RR2_identity}
L_1+L_2+L_3 = K_1+K_2+K_3,
\end{equation}
where
\begin{align*}
L_1 &= yq_{j,x}\left( s_0 \tilde D_x (\hat{\mathcal R}_2 \cdot P)_0 - p_0 \tilde D_x (\hat{\mathcal R}_2 \cdot S)_0  \right),\\[2mm]
L_2 &= (yq_{0,x}s_0+ts_0+ys_1) \tilde D_x (\hat{\mathcal R}_2 \cdot P)_j - (yq_{0,x}p_0+tp_0+yp_1) \tilde D_x (\hat{\mathcal R}_2 \cdot S)_j,\\[2mm]
L_3 &= y\left(s_0 \tilde D_x (\hat{\mathcal R}_2 \cdot P)_{j+1} - p_0 \tilde D_x (\hat{\mathcal R}_2 \cdot S)_{j+1}\right),\\[2mm]
K_1 &= yq_{j,x}\left( s_0 (\hat{\mathcal R}_2 \cdot \tilde D_x P)_0 - p_0 (\hat{\mathcal R}_2 \cdot \tilde D_x S)_0  \right),\\[2mm]
K_2 &= (yq_{0,x}s_0+ts_0+ys_1) (\hat{\mathcal R}_2 \cdot \tilde D_x P)_j - (yq_{0,x}p_0+tp_0+yp_1) (\hat{\mathcal R}_2 \cdot \tilde D_x S)_j,\\[2mm]
K_3 &= y\left(s_0 (\hat{\mathcal R}_2 \cdot \tilde D_x P)_{j+1} - p_0 (\hat{\mathcal R}_2 \cdot \tilde D_x S)_{j+1}\right).\\[2mm]
\end{align*}

Expanding $L_1$ according to the component formula \eqref{RR2_comp} and applying the product rule, we find
\begin{equation}
\label{L1_identity}
\begin{aligned}
&L_1 = yq_{j,x} s_0 \tilde D_x \left((yq_{0,x}^2+yq_{1,x}+tq_{0,x})p_0+yq_{0,x}p_1+zp_0+tp_{1}+yp_{2}\right)\\[2mm]
&\hspace{10mm} - yq_{j,x} p_0 \tilde D_x \left((yq_{0,x}^2+yq_{1,x}+tq_{0,x})s_0+yq_{0,x}s_1+zs_0+ts_{1}+ys_{2}\right)\\[2mm]
&\hspace{5mm} = yq_{j,x}\left( s_0 (\hat{\mathcal R}_2 \cdot \tilde D_x P)_0 - p_0 (\hat{\mathcal R}_2 \cdot \tilde D_x S)_0  \right),\\[2mm]
&\hspace{10mm} + yq_{j,x} s_0 \left( (2yq_{0,x}q_{0,xx}+yq_{1,xx}+tq_{0,xx})p_0 + yq_{0,xx}p_1\right)\\[2mm]
&\hspace{10mm} - yq_{j,x} p_0 \left( (2yq_{0,x}q_{0,xx}+yq_{1,xx}+tq_{0,xx})s_0 + yq_{0,xx}s_1\right)\\[2mm]
&\hspace{5mm} = K_1 + y^2 q_{j,x} q_{0,xx}(s_0 p_1 - p_0 s_1).
\end{aligned}
\end{equation}

Similarly, we obtain
\begin{equation}
\label{L2_identity}
L_2 = K_2 + y^2 (q_{j,x} q_{0,xx}+q_{j+1,xx})(p_0 s_1 - s_0 p_1),
\end{equation}
and
\begin{equation}
\label{L3_identity}
L_3 = K_3 + y^2 q_{j+1,xx}(s_0 p_1 - p_0 s_1).
\end{equation}

Summing \eqref{L1_identity}--\eqref{L3_identity}, the remaining terms cancel pairwise, yielding \eqref{RR2_identity}.
\end{proof}

\begin{remark}
The combinations $L_1+L_3$ and $K_1+K_3$ arising in the above computation admit more compact representations involving the recursion operator $\hat{\mathcal R}_1$. Namely,
\begin{align*}
L_1+L_3
&=ys_0\left(\hat{\mathcal R}_1\cdot\tilde D_x(\hat{\mathcal R}_2\cdot P)\right)_j
-yp_0\left(\hat{\mathcal R}_1\cdot\tilde D_x(\hat{\mathcal R}_2\cdot S)\right)_j,\\[2mm]
K_1+K_3
&=ys_0\left(\hat{\mathcal R}_1\cdot(\hat{\mathcal R}_2\cdot\tilde D_xP)\right)_j
-yp_0\left(\hat{\mathcal R}_1\cdot(\hat{\mathcal R}_2\cdot\tilde D_xS)\right)_j.
\end{align*}
Nevertheless, these representations are not used in the proof, where the Nijenhuis torsion of $\hat{\mathcal R}_2$ is computed directly from its component formula \eqref{4DUHE_ro2_ff}.
\end{remark}

\section{The Fr\"olicher--Nijenhuis bracket}
The heredity of the full-fledged recursion operators established in the previous section concerns each operator separately. We now turn to the compatibility of the recursion operators associated with the 4D MAS equation \eqref{4DMAS} and the 4D UHE \eqref{4DUHE}. Compatibility is a natural stronger property for a pair of hereditary recursion operators: it guarantees that their linear combinations are again hereditary. This property is characterized by the vanishing of the Fr\"olicher--Nijenhuis bracket, which extends the Nijenhuis torsion to pairs of operators.

Consider a linear combination
$$\mathcal R_1+\lambda\mathcal R_2$$
of two hereditary recursion operators and substitute it into the defining formula for the Nijenhuis torsion \eqref{n_tor}. We obtain
\begin{align*}
N_{\mathcal R_1+\lambda\mathcal R_2}(\Phi_1,\Phi_2)
&=
\{(\mathcal R_1+\lambda\mathcal R_2)\cdot\Phi_1,
(\mathcal R_1+\lambda\mathcal R_2)\cdot\Phi_2\}\\
&\hspace{-15mm}-(\mathcal R_1+\lambda\mathcal R_2)\cdot
\Big(
\{(\mathcal R_1+\lambda\mathcal R_2)\cdot\Phi_1,\Phi_2\}
+\{\Phi_1,(\mathcal R_1+\lambda\mathcal R_2)\cdot\Phi_2\}
-(\mathcal R_1+\lambda\mathcal R_2)\cdot\{\Phi_1,\Phi_2\}
\Big).
\end{align*}
Expanding this expression with respect to $\lambda$, we obtain
\begin{align*}
N_{\mathcal R_1+\lambda\mathcal R_2}(\Phi_1,\Phi_2)=N_{\mathcal R_1}(\Phi_1,\Phi_2)
+\lambda[\mathcal R_1,\mathcal R_2]_{\mathrm{FN}}(\Phi_1,\Phi_2) +\lambda^2N_{\mathcal R_2}(\Phi_1,\Phi_2),
\end{align*}
where the coefficient of $\lambda$ is
\begin{equation}
\label{FN_bracket}
\begin{aligned}
[\mathcal R_1,\mathcal R_2]_{\mathrm{FN}}(\Phi_1,\Phi_2)
&=
\{\mathcal R_1\cdot\Phi_1,\mathcal R_2\cdot\Phi_2\}
+\{\mathcal R_2\cdot\Phi_1,\mathcal R_1\cdot\Phi_2\}\\[2mm]
&\hspace{5mm}-\mathcal R_1\cdot
\left(
\{\mathcal R_2\cdot\Phi_1,\Phi_2\}
+\{\Phi_1,\mathcal R_2\cdot\Phi_2\}
-\mathcal R_2\cdot\{\Phi_1,\Phi_2\}
\right)\\[2mm]
&\hspace{5mm}-\mathcal R_2\cdot
\left(
\{\mathcal R_1\cdot\Phi_1,\Phi_2\}
+\{\Phi_1,\mathcal R_1\cdot\Phi_2\}
-\mathcal R_1\cdot\{\Phi_1,\Phi_2\}
\right),
\end{aligned}
\end{equation}
and we call it the Fr\"olicher--Nijenhuis bracket.

Since both $\mathcal R_1$ and $\mathcal R_2$ are hereditary, their linear combination $\mathcal R_1+\lambda\mathcal R_2$ is hereditary for every $\lambda$ if and only if
\[
[\mathcal R_1,\mathcal R_2]_{\mathrm{FN}}=0.
\]
This motivates the direct computation of the Fr\"olicher--Nijenhuis bracket for the pairs of full-fledged recursion operators studied below.
\begin{proposition}
The recursion operators $\mathcal R_1$ and $\mathcal R_2$ for the 4D MAS equation \eqref{4DMAS} are compatible, i.e. their Fr\"olicher--Nijenhuis bracket $[\mathcal R_1,\mathcal R_2]_{\mathrm{FN}}$ vanishes identically on the $\Bbb R$-algebra $\mathcal A^\Bbb N$.
\end{proposition}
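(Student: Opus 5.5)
The plan is to avoid a fresh term-by-term computation and instead use the structural relation \eqref{R1R2}, $\mathcal R_2=t\mathcal R_1+z\mathcal I$, together with the expansion of $N_{\mathcal R_1+\lambda\mathcal R_2}$ derived just above. By Propositions \ref{R1_heredity} and \ref{R2_heredity}, $N_{\mathcal R_1}=0$ and $N_{\mathcal R_2}=0$ on all of $\mathcal A^{\mathbb N}$. The expansion in $\lambda$ then shows that $[\mathcal R_1,\mathcal R_2]_{\mathrm{FN}}=0$ is equivalent to the heredity of $\mathcal R_1+\lambda\mathcal R_2$ for every $\lambda$. Since
$$\mathcal R_1+\lambda\mathcal R_2=(1+\lambda t)\mathcal R_1+\lambda z\,\mathcal I,$$
it suffices to show that operators of the form $a\mathcal R_1+b\mathcal I$, with $a=1+\lambda t$ and $b=\lambda z$, have vanishing Nijenhuis torsion on $\mathcal A^{\mathbb N}$.

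For this I would repeat the argument of Proposition \ref{R1_heredity} for the operator $\mathcal R_\lambda$ given componentwise by
$$(\mathcal R_\lambda\cdot P)_j=a\,q_{j,x}p_0+b\,p_j+a\,p_{j+1}.$$
The key observation is that $a$ and $b$ depend only on the independent variables $t,z$, not on the fiber coordinates $q_{k,\sigma}$. Consequently, in computing $\partial(\mathcal R_\lambda\cdot P)_j/\partial q_{k,\sigma}$ the only extra term beyond $(\mathcal R_\lambda\cdot \partial P/\partial q_{k,\sigma})_j$ comes from $\partial(a q_{j,x})/\partial q_{0,x}$ multiplied by $p_0$. This yields the same bookkeeping as in \eqref{R_PS}--\eqref{RP_RS}, with $p_0\tilde D_x$ replaced by $a\,p_0\tilde D_x$. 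The whole verification then reduces, exactly as in the proof of Proposition \ref{R2_heredity} for $\mathcal R_2$, to the identity
$$-a p_0\tilde D_x(\mathcal R_\lambda\cdot S)_j+a s_0\tilde D_x(\mathcal R_\lambda\cdot P)_j=-a p_0(\mathcal R_\lambda\cdot\tilde D_xS)_j+a s_0(\mathcal R_\lambda\cdot\tilde D_xP)_j.$$
This identity holds because $\tilde D_x a=\tilde D_x b=0$. The only non-commuting term is $a q_{j,xx}$, and it enters as $a q_{j,xx}p_0s_0$ with opposite signs, so it cancels. Hence $N_{\mathcal R_1+\lambda\mathcal R_2}=0$ identically in $\lambda$, and comparing the coefficients of $\lambda$ gives the claim.

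The main point to check carefully is precisely this step: the coefficients $t,z$ are annihilated by $\tilde D_x$ but not by $\tilde D_t,\tilde D_z$. The Jacobi bracket, however, only involves derivatives $\partial/\partial q_{k,\sigma}$ of the components; the total derivatives $\tilde D_\sigma$ are applied to the components themselves and are never commuted past the coefficients. So the $t,z$-dependence should play no role beyond the single $\tilde D_x$ commutation.

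As a fallback, I would compute \eqref{FN_bracket} directly. Substituting $\mathcal R_2=t\mathcal R_1+z\mathcal I$ and using the $\mathbb R$-bilinearity of the bracket alone is not enough, because $t$ and $z$ are functions rather than constants. I would therefore track the same residual $\tilde D_x$-terms as above and verify that they cancel pairwise.
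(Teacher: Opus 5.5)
Your proof is correct, but it takes a different route from the paper's.

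The paper evaluates the four terms of \eqref{FN_bracket} directly from the component formulas \eqref{4DMAS-ro1-ff} and \eqref{4DMAS-ro2-ff}. It then reduces the vanishing of the bracket to a mixed identity in which $\tilde D_x$ is commuted past both $\mathcal R_1$ and $\mathcal R_2$.

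You instead polarize. Since $\lambda$ is a real constant, expanding $N_{\mathcal R_1+\lambda\mathcal R_2}$ in $\lambda$ is legitimate. The pencil $(1+\lambda t)\mathcal R_1+\lambda z\,\mathcal I$ has exactly the shape of $\mathcal R_2$. So a single heredity computation for $a\mathcal R_1+b\,\mathcal I$, with fiber-independent coefficients annihilated by $\tilde D_x$, disposes of $N_{\mathcal R_1}$, $N_{\mathcal R_2}$ and $[\mathcal R_1,\mathcal R_2]_{\mathrm{FN}}$ at once.

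Your key step is sound. The blocks $\tilde D_\sigma(\mathcal R_\lambda\cdot P)_k$, which do involve $\tilde D_t a$ and $\tilde D_z b$, are never expanded and cancel as wholes. $\mathcal R_\lambda$ passes through the scalar factors $a p_0$ and $a s_0$. The residual $a^2q_{j,xx}p_0s_0$ cancels. In fact, the $\lambda$-coefficient of your reduced identity is exactly the paper's reduced identity.

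What your route buys is economy and a conceptual reason for compatibility: the pencil stays inside a class of operators already shown to be hereditary. The same idea would also give the 4D UHE compatibility from the proof of Proposition \ref{R2_heredity}. Indeed, $\hat{\mathcal R}_1+\lambda\hat{\mathcal R}_2=\lambda y\hat{\mathcal R}_1^2+(1+\lambda t)\hat{\mathcal R}_1+\lambda z\,\mathcal I$ has the shape of $\hat{\mathcal R}_2$, and that computation only uses that $y,t,z$ are fiber-independent and annihilated by $\tilde D_x$.

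What the paper's direct computation buys is independence from any such structural relation between the two operators. It therefore serves as a template for pairs whose pencil is not of a form already known to be hereditary.

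Your fallback is unnecessary.
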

\begin{proof}
As in the proofs of Propositions \ref{R1_heredity} and \ref{R2_heredity}, we use the component formulas \eqref{4DMAS-ro1-ff} for $\mathcal R_1$ and \eqref{4DMAS-ro2-ff} for $\mathcal R_2$ to compute the components of their Fr\"olicher--Nijenhuis bracket directly from the defining formula \eqref{FN_bracket}. For each $j\geq0$, the four terms entering \eqref{FN_bracket} are evaluated as follows:
\begin{align*}
&\left( \mathcal R_1 \cdot \left( \left\{\mathcal R_2\cdot P,S \right\} + \left\{P,\mathcal R_2 \cdot S \right\} - \mathcal R_2\cdot \left\{ P,S \right\} \right) \right)_j \\[2mm]
&\hspace{5mm} =  \sum \limits_{\sigma, k} \tilde D_\sigma (\mathcal R_2 \cdot P)_k \left( \mathcal R_1 \cdot \pd{S}{q_{k,\sigma}} \right)_j - \sum \limits_{\sigma, k} \tilde D_\sigma (\mathcal R_2 \cdot S)_k \left( \mathcal R_1 \cdot \pd{P}{q_{k,\sigma}} \right)_j\\
&\hspace{10mm} - tp_0 (\mathcal R_1 \cdot \tilde D_x S)_j + ts_0 (\mathcal R_1 \cdot \tilde D_x P)_j,
\end{align*}

\begin{align*}
& \left( \mathcal R_2 \cdot \left( \left\{\mathcal R_1\cdot P,S \right\} + \left\{P,\mathcal R_1\cdot S \right\} - \mathcal R_1\cdot \left\{ P,S \right\} \right) \right)_j\\[2mm] 
&\hspace{5mm} = \sum \limits_{\sigma, k} \tilde D_\sigma (\mathcal R_1  \cdot P)_k \left(\mathcal R_2 \cdot  \pd{S}{q_{k,\sigma}} \right)_j  -
\sum \limits_{\sigma, k}  \tilde D_\sigma (\mathcal R_1\cdot S)_k  \left(\mathcal R_2 \cdot  \pd{P}{q_{k,\sigma}} \right)_j\\[2mm] 
&\hspace{10mm} - p_0 \left( \mathcal R_2 \cdot \tilde D_x S \right)_j + s_0 \left( \mathcal R_2 \cdot \tilde D_x P \right)_j,
\end{align*}

\begin{align*}
\left\{\mathcal R_1 \cdot P, \mathcal R_2 \cdot S \right\}_j &=  \sum \limits_{\sigma, k} \tilde D_\sigma (\mathcal R_1 \cdot P)_k \left( \mathcal R_2 \cdot \pd{S}{q_{k,\sigma}} \right)_j + ts_0 \tilde D_x (\mathcal R_1 \cdot P)_j\\
&\hspace{10mm} - \sum \limits_{\sigma, k} \tilde D_\sigma (\mathcal R_2 \cdot S)_k \left( \mathcal R_1 \cdot \pd{P}{q_{k,\sigma}} \right)_j - p_0 \tilde D_x (\mathcal R_2 \cdot S)_j,
\end{align*}

\begin{align*}
\left\{\mathcal R_2 \cdot P, \mathcal R_1 \cdot S \right\}_j &=  \sum \limits_{\sigma, k} \tilde D_\sigma (\mathcal R_2 \cdot P)_k \left( \mathcal R_1 \cdot \pd{S}{q_{k,\sigma}} \right)_j + s_0 \tilde D_x (\mathcal R_2 \cdot P)_j\\
&\hspace{10mm} - \sum \limits_{\sigma, k} \tilde D_\sigma (\mathcal R_1 \cdot S)_k \left( \mathcal R_2 \cdot \pd{P}{q_{k,\sigma}} \right)_j - tp_0 \tilde D_x (\mathcal R_1 \cdot S)_j.
\end{align*}
Thus, the proof reduce to the identity
\begin{align*} 
&-p_0\tilde D_x \left(\mathcal R_2 \cdot S\right)_j + ts_0\tilde D_x \left(\mathcal R_1\cdot P\right)_j -tp_0\tilde D_x \left(\mathcal R_1 \cdot S\right)_j + s_0\tilde D_x \left(\mathcal R_2\cdot P\right)_j\\[2mm] 
&\hspace{5mm}= - p_0 \left( \mathcal R_2 \cdot \tilde D_x S \right)_j + ts_0 \left( \mathcal R_1 \cdot \tilde D_x P \right)_j - tp_0 \left( \mathcal R_1 \cdot \tilde D_x S \right)_j + s_0 \left( \mathcal R_2 \cdot \tilde D_x P \right)_j
\end{align*}
which follows directly from the component formulas \eqref{4DMAS-ro1-ff} and \eqref{4DMAS-ro2-ff}.
\end{proof}

An analogous statement holds for the recursion operators $\hat{\mathcal R}_1$ and $\hat{\mathcal R}_2$ for the 4D UHE \eqref{4DUHE}.
\begin{proposition}
The recursion operators $\hat{\mathcal R}_1$ and $\hat{\mathcal R}_2$ for the 4D UHE \eqref{4DUHE} are compatible, i.e. their Fr\"olicher--Nijenhuis bracket $[\hat{\mathcal R}_1,\hat{\mathcal R}_2]_{\mathrm{FN}}$ vanishes identically on the $\Bbb R$-algebra $\mathcal A^\Bbb N$.
\end{proposition}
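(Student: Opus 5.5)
We follow the scheme of the previous proof for $\mathcal R_1,\mathcal R_2$, using the component formulas \eqref{4DUHE_ro1_ff} and \eqref{4DUHE_ro2_ff}. Fix arbitrary $P,S\in\mathcal A^{\mathbb N}$ and $j\geq0$. We expand each of the four terms of \eqref{FN_bracket} componentwise, writing $\{P,S\}_j=\sum_{\sigma,k}(\tilde D_\sigma p_k\,\partial s_j/\partial q_{k,\sigma}-\tilde D_\sigma s_k\,\partial p_j/\partial q_{k,\sigma})$.

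The key structural fact, already visible in \eqref{RP_S}--\eqref{RP_RS}, is the following. Let $\mathcal R$ be either operator. Differentiating $(\mathcal R\cdot P)_j$ with respect to $q_{k,\sigma}$ produces two kinds of terms. The first is $(\mathcal R\cdot\partial P/\partial q_{k,\sigma})_j$. The second consists of extra terms coming from the explicit dependence of the matrix entries on the $q_{i,x}$. For $\hat{\mathcal R}_1$ the only extra term is $p_0\tilde D_x s_j$. For $\hat{\mathcal R}_2$ the entries $yq_{j,x}q_{0,x}+yq_{j+1,x}+tq_{j,x}$ and $yq_{j,x}$ give the extra terms
$$(yq_{0,x}p_0+tp_0+yp_1)\,\tilde D_x(\cdot)_j+yq_{j,x}\,p_0\,\tilde D_x(\cdot)_0+y\,p_0\,\tilde D_x(\cdot)_{j+1}.$$
These are exactly the coefficients that appear in $L_1,L_2,L_3$ in the proof of Proposition~\ref{R2_heredity}.

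With this bookkeeping, the principal sums cancel pairwise, exactly as in the proof for the 4D MAS pair. These are the sums of the form $\sum\tilde D_\sigma(\hat{\mathcal R}_a\cdot P)_k(\hat{\mathcal R}_b\cdot\partial S/\partial q_{k,\sigma})_j$ with $a\neq b$, together with their $P\leftrightarrow S$ counterparts. The first-order terms in $\{\cdot,\cdot\}$ that involve only $\{P,S\}$ also cancel; this uses $\hat{\mathcal R}_1\hat{\mathcal R}_2=\hat{\mathcal R}_2\hat{\mathcal R}_1$, which holds because of \eqref{R1R2-UHE}. The statement is then reduced to an identity of the form
$$\sum(\text{coefficient})\cdot\tilde D_x(\hat{\mathcal R}_a\cdot X)_i=\sum(\text{same coefficient})\cdot(\hat{\mathcal R}_a\cdot\tilde D_x X)_i,$$
where $X\in\{P,S\}$ and $a\in\{1,2\}$. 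The coefficients are the extra terms of $\hat{\mathcal R}_1$, weighted by the $\hat{\mathcal R}_2$-action, and vice versa. Concretely, the left side collects $s_0\tilde D_x(\hat{\mathcal R}_2\cdot P)_j-p_0\tilde D_x(\hat{\mathcal R}_2\cdot S)_j$ together with the terms $(yq_{0,x}s_0+ts_0+ys_1)\tilde D_x(\hat{\mathcal R}_1\cdot P)_j$, $yq_{j,x}s_0\tilde D_x(\hat{\mathcal R}_1\cdot P)_0$ and $ys_0\tilde D_x(\hat{\mathcal R}_1\cdot P)_{j+1}$, minus the same expressions with $P\leftrightarrow S$. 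The right side is the same with $\tilde D_x$ moved inside.

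The main obstacle is verifying this final identity. We first apply the product rule, as in \eqref{L1_identity}. The difference between $\tilde D_x(\hat{\mathcal R}_a X)_i$ and $(\hat{\mathcal R}_a\tilde D_xX)_i$ consists only of terms $(\tilde D_x\,\text{entry})\cdot x_l$. For $\hat{\mathcal R}_1$ this is $q_{i,xx}x_0$. For $\hat{\mathcal R}_2$ it is $(2yq_{i,x}q_{0,xx}\ \text{or its analogue}+yq_{i+1,xx}+tq_{i,xx})x_0+yq_{i,xx}x_1$. We then collect the defect terms. Every term proportional to $x_0$ multiplied by a coefficient involving $s_0$, when $X=P$, is symmetric under $P\leftrightarrow S$ and cancels in the antisymmetrization. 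This leaves only terms of the form $(\dots)(s_0p_1-p_0s_1)$, with prefactors $t\,y$, $y$, $y^2$ built from $q_{0,xx}$, $q_{j,x}q_{0,xx}$ and $q_{j+1,xx}$. We will group these by monomial and check that they cancel, in the same way that \eqref{L1_identity}--\eqref{L3_identity} summed to zero. The terms coming from the $\hat{\mathcal R}_1$-defects weighted by $\hat{\mathcal R}_2$ coefficients should offset those from the $\hat{\mathcal R}_2$-defects weighted by $\hat{\mathcal R}_1$ coefficients.

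As a consistency check, and as a possible shortcut, we can argue abstractly. $[\cdot,\cdot]_{\mathrm{FN}}$ is bilinear over $\mathbb R$, and its behaviour under multiplication of $\mathcal R$ by the functions $y,t,z$ is governed by the same ``$\tilde D$ of the coefficient'' terms. Proposition~\ref{R1_heredity} gives $N_{\hat{\mathcal R}_1}=0$. It follows that $[\hat{\mathcal R}_1,\hat{\mathcal R}_1^2]_{\mathrm{FN}}=0$ and $[\hat{\mathcal R}_1,\mathcal I]_{\mathrm{FN}}=0$. Hence only the contributions from $\tilde D_\sigma$ falling on $y,t,z$ survive. These vanish because $y,t,z$ are independent variables and the Jacobi bracket, as written above, involves only $\partial/\partial q_{k,\sigma}$ and no $\partial/\partial x_i$. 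We would use this abstract argument to cross-check the monomial cancellation.
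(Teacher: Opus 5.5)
Your main argument is the paper's own proof. The identity you reduce to coincides with the one displayed in the paper's proof: the coefficients $s_0$, $yq_{0,x}s_0+ts_0+ys_1$, $yq_{j,x}s_0$, $ys_0$ multiply $\tilde D_x$ of $\hat{\mathcal R}_2\cdot P$ resp.\ $\hat{\mathcal R}_1\cdot P$, antisymmetrized in $P\leftrightarrow S$. Your product-rule check does close it, but two details of your sketch need correcting.

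First, your forecast of the residual terms is wrong. The bracket is linear in $\hat{\mathcal R}_2$, so no $y^2$ or $ty$ terms can appear. Once the symmetric $p_0s_0$ terms drop out, only two terms survive. One is $yq_{j,xx}(s_0p_1-p_0s_1)$, coming from $s_0\tilde D_x(\hat{\mathcal R}_2\cdot P)_j-p_0\tilde D_x(\hat{\mathcal R}_2\cdot S)_j$. The other is its negative, coming from the coefficients $ys_1$, $yp_1$ multiplying the $\hat{\mathcal R}_1$-defects $q_{j,xx}p_0$, $q_{j,xx}s_0$. Also, commutativity of $\hat{\mathcal R}_1$ and $\hat{\mathcal R}_2$ is true but not needed here. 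The $\hat{\mathcal R}_a\cdot\{P,S\}$ terms already cancel inside each inner bracket, exactly as in \eqref{RP_S+P_RS-R_PS}.

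Second, your shortcut gives a genuinely different and shorter proof, but the reason you give is false. In general $\{P,yS\}\neq y\{P,S\}$, because $\mathbf E_{yS}$ contains $\tilde D_y(ys_k)=s_k+y\tilde D_ys_k$. What does hold is the following. For $f$ independent of the $q_{k,\sigma}$, put $\delta_C=\mathbf E_{fC}-f\,\mathbf E_C=\sum_{\sigma,k}\bigl(\tilde D_\sigma(fc_k)-f\tilde D_\sigma c_k\bigr)\partial/\partial q_{k,\sigma}$. Then $\{A,fB\}=f\{A,B\}-\delta_B(A)$ and $\{fA,B\}=f\{A,B\}+\delta_A(B)$. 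Expanding the definition with these gives $[\mathcal R,f\mathcal M]_{\mathrm{FN}}(A,B)=f[\mathcal R,\mathcal M]_{\mathrm{FN}}(A,B)+\delta_{\mathcal M\cdot A}(\mathcal R)\cdot B-\delta_{\mathcal M\cdot B}(\mathcal R)\cdot A$, with $\delta_C$ applied entrywise to the matrix $\mathcal R$. For $\mathcal R=\hat{\mathcal R}_1$ the only nonconstant entries are the $q_{i,x}$, and $\delta_C(q_{i,x})=(\tilde D_xf)\,c_i=0$ for $f\in\{y,t,z\}$. That, not the absence of $\partial/\partial x_i$ in the bracket, is why the extra terms die.

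Three purely algebraic identities, valid for any bilinear bracket, finish the argument: $[\mathcal R,\mathcal I]_{\mathrm{FN}}=0$, $[\mathcal R,\mathcal R]_{\mathrm{FN}}=2N_{\mathcal R}$, and $[\mathcal R,\mathcal R^2]_{\mathrm{FN}}(A,B)=N_{\mathcal R}(\mathcal R\cdot A,B)+N_{\mathcal R}(A,\mathcal R\cdot B)+2\mathcal R\cdot N_{\mathcal R}(A,B)$. With them you get $[\hat{\mathcal R}_1,\hat{\mathcal R}_2]_{\mathrm{FN}}=y[\hat{\mathcal R}_1,\hat{\mathcal R}_1^2]_{\mathrm{FN}}+2t\,N_{\hat{\mathcal R}_1}=0$ from Proposition~\ref{R1_heredity} alone.

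What this route buys is a structural explanation: compatibility is inherited from the heredity of $\hat{\mathcal R}_1$, because the coefficients of $\hat{\mathcal R}_2$ as a polynomial in $\hat{\mathcal R}_1$ are annihilated by $\tilde D_x$. It also settles the 4D MAS pair in one line. The paper's componentwise route needs none of these identities, but it does not show why the cancellation happens.
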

\begin{proof}
Using the component formulas \eqref{4DUHE_ro1_ff} and \eqref{4DUHE_ro2_ff} for $\hat{\mathcal R}_1$ and $\hat{\mathcal R}_2$, respectively, the vanishing of their Fr\"olicher--Nijenhuis bracket reduces to the identity
\begin{align*}
&ys_0 (\hat{\mathcal R}_1 \cdot \tilde D_x (\hat{\mathcal R}_1 \cdot P))_j + (yq_{0,x}s_0+ts_0+ys_1) \tilde D_x (\hat{\mathcal R}_1 \cdot P)_j - p_0 \tilde D_x (\hat{\mathcal R}_2 \cdot S)_j\\[2mm]
&\hspace{5mm} - yp_0 (\hat{\mathcal R}_1 \cdot \tilde D_x (\hat{\mathcal R}_1 \cdot S))_j - (yq_{0,x}p_0+tp_0+yp_1) \tilde D_x (\hat{\mathcal R}_1 \cdot S)_j + s_0 \tilde D_x (\hat{\mathcal R}_2 \cdot P)_j \\[2mm]
&= ys_0 (\hat{\mathcal R}_1^2 \cdot \tilde D_x P)_j + (yq_{0,x}s_0+ts_0+ys_1) (\hat{\mathcal R}_1 \cdot \tilde D_x P)_j - p_0 (\hat{\mathcal R}_2 \cdot \tilde D_x S)_j\\[2mm]
&\hspace{5mm} - yp_0 (\hat{\mathcal R}_1^2 \cdot \tilde D_x S )_j  - (yq_{0,x}p_0+tp_0+yp_1) (\hat{\mathcal R}_1 \cdot \tilde D_x S)_j + s_0 (\hat{\mathcal R}_2 \cdot \tilde D_x P)_j,
\end{align*}
which follows directly from the component formulas \eqref{4DUHE_ro1_ff} and \eqref{4DUHE_ro2_ff}.
\end{proof}

\section{Conclusions}
In this paper, we have developed and applied a full-fledged approach to recursion operators for multidimensional linearly degenerate integrable equations. Starting from their action on full-fledged symmetries, we represented the recursion operators as endomorphisms of the $\mathbb R$-algebra $\mathcal A^{\mathbb N}$ and used these representations to investigate their algebraic properties directly. For the 4D MAS equation \eqref{4DMAS} and the 4D UHE \eqref{4DUHE}, we have shown that the corresponding full-fledged recursion operators are hereditary and that the two operators associated with each equation are compatible. In particular, the direct computation of the Nijenhuis torsion and the Fr\"olicher--Nijenhuis bracket demonstrates that the full-fledged representation provides a practical framework for studying these properties in the multidimensional setting.

The approach developed in this paper is not restricted to the particular coverings considered here and can be applied to full-fledged recursion operators represented in more general coverings. As an example, one may consider the richer covering for the 4D MAS equation constructed in \cite{Voj23} in connection with the inversion of the recursion operator $\mathcal R_2^{sh}$ acting on shadows. At the same time, constructing a covering of the 4D UHE equation suitable for describing the inverse of $\hat{\mathcal R}_2$ remains an open problem. The approach is likewise not restricted to the two equations considered here and can in principle be applied to other multidimensional integrable equations as soon as full-fledged forms of their recursion operators are available.

Remarkably, the infinite-dimensional matrix representations of the recursion operators $\mathcal R_1$ and $\hat{\mathcal R}_1$ obtained in this paper are precisely of the first companion form appearing in the theory of $\mathfrak{gl}$-regular Nijenhuis operators developed by Bolsinov, Konyaev, and Matveev \cite{Bol24,Kon24,Mat24}. While their theory has been developed primarily in the finite-dimensional setting, the appearance of this companion structure in the recursion operators of four-dimensional integrable PDEs raises the natural question of whether their approach admits a suitable extension to the present infinite-dimensional setting. Exploring this connection may provide new structural information about full-fledged recursion operators and their associated symmetry and conservation-law hierarchies, and could facilitate the construction of explicit exact solutions through finite-dimensional reductions and integration in quadratures. We intend to pursue these questions in future work.

\section*{Acknowledgments}
\label{sec:acknowledgments}

The symbolic computations were performed using the software \textsc{Jets} \cite{Jets}. The research was supported by the Ministry of Education, Youth and Sports of the Czech Republic (MSMT CR) under RVO funding for IC47813059.

\end{document}